\documentclass[11 pt]{article}
\usepackage{geometry}  
\usepackage{amssymb} 
\usepackage{graphicx} 
\usepackage[font=footnotesize, labelfont=bf]{caption} \usepackage{subcaption} \usepackage{booktabs} \usepackage{setspace} \usepackage{xpatch} \usepackage{csquotes} \usepackage{enumitem} \usepackage[section]{placeins} \usepackage[flushleft]{threeparttable} \usepackage{cjhebrew} 
\usepackage{pdflscape} 
\usepackage[section]{placeins} 

\usepackage{titlesec}
\titleformat{\paragraph}
{\normalfont\normalsize\bfseries}{\theparagraph}{1em}{}
\titlespacing*{\paragraph}
{0pt}{3.25ex plus 1ex minus .2ex}{1.5ex plus .2ex}

\usepackage[page,header,title]{appendix}

\usepackage{titletoc}
\usepackage[nottoc,numbib]{tocbibind}
\usepackage{calc}
\usepackage{tocloft}
\renewcommand*{\cftfigname}{\figurename\space}

\renewcommand*{\cfttabname}{\tablename\space}

\usepackage{latexsym}
\usepackage{relsize}
\usepackage{etoolbox}
\usepackage{bm}
\usepackage{color}
\usepackage{soul}
\usepackage{xcolor}
\usepackage{siunitx} \usepackage{amsmath} \usepackage{amsthm} \usepackage{physics} \usepackage{multirow} 
\usepackage{tikz} \tikzstyle{node} = [circle, draw, text centered]
\tikzstyle{line} = [draw, -latex']
\usetikzlibrary{shapes,arrows}
\usetikzlibrary{tikzmark,fit,shapes.geometric} \usetikzlibrary{arrows.meta} \usetikzlibrary{positioning}
\tikzset{>={Latex[width=3mm,length=3mm]}} \newtheorem{result}{Result}
\newtheorem{proposition}{Proposition}

\usepackage{url}
\usepackage{hyperref}
\hypersetup{
    colorlinks=true,
    linkcolor=blue,
    filecolor=magenta,      
    urlcolor=cyan
    }
\usepackage[english]{babel} \usepackage[
    doi=false,     isbn=false,
    url=false,
    eprint=false,
    backend=biber,
    style=authoryear,     maxbibnames=9,
    maxcitenames=2
  ]{biblatex}

\AtEveryBibitem{\clearfield{month}}
\AtEveryBibitem{\clearfield{day}} 
\AtEveryBibitem{\clearlist{language}} 
\renewbibmacro*{volume+number+eid}{  \printfield{volume}  \setunit*{\addnbspace}  \printfield{number}  \setunit{\addcomma\space}  \printfield{eid}}
\DeclareFieldFormat[article]{number}{\mkbibparens{#1}}

\usepackage{csquotes} \renewbibmacro{in:}{  \ifentrytype{article}{}{\printtext{\bibstring{in}\intitlepunct}}}

\begin{document}

\begin{titlepage}
   \begin{center}
        \Large
        \textbf{Horizontal or Vertical? Managerial Attention and Organizational Structure} \\
       \vspace*{3cm}
       Ashley Perry \\
       New York University Abu Dhabi
            
       \vspace{3cm}
        \small
        \begin{abstract}
           \noindent This paper analyzes the relative efficiency of horizontal and vertical organizational structures modeled as communication networks under limited managerial attention. In this framework, workers receive noisy signals about the state of the world and communicate reports to a decision-maker whose meeting time relatively more constrained. The clarity of each report depends endogenously on meeting lengths and the workers' heterogeneous communication and processing abilities. When workers possess identical communication abilities, the horizontal structure is always optimal because it allows the decision-maker to weight signals optimally without compounding transmission noise. Conversely, when workers have asymmetric communication abilities, the attention constraint of the decision-maker induces a structural switch. Under these constraints, the vertical structure becomes preferred because the decision-maker can delegate the time-intensive task of information aggregation to the superior communicator, balancing the loss from transmission noise against the premium on the decision-maker's scarce time. Under the assumption of homogeneous processing abilities I derive the analytical threshold for the \(n\)-worker model. 
        \end{abstract}
\begin{minipage}{\textwidth}
  \raggedright
  \small

\noindent This version: September 2026

\noindent JEL Codes: D23, D82, D83 
\noindent organizational design; communication; information aggregation
         
\end{minipage}
\vfill
\begin{minipage}{\textwidth}
  \raggedright
  \small

  \bigskip
  {\footnotesize \noindent\textit{Acknowledgments.} I'd like to that Jean-Pierre Beno\^it, Andrea Galeotti, Ben Golub, David Myatt, Emre Ozdenoren, as well as three anonymous referees and seminar participants at London Business School, for their valuable comments and suggestions.}
\end{minipage}

   \end{center}
\end{titlepage}

\onehalfspacing

\section{Introduction}
There is a broad view that the structure of organizations can be either hierarchical \parencite{radner1992HierarchyEconomicsManaging,garicano2000HierarchiesOrganizationKnowledge}, or they can be flat \parencite{drucker1988ComingNewOrganization,drucker1992NewSocietyOrganizations, puranam2015ValveWay, lee2017SelfmanagingOrganizationsExploring}. These structures can be viewed as communication networks governing the flow of information in an organization (\cite{bolton1994FirmCommunicationNetwork}). However, what is the best structure for an organization? This paper considers these two structures, horizontal (flat) and vertical (hierarchical), and asks under what conditions would an organization prefer to have one over the other. For example, consider the CEO of a tech start-up firm building a new product. Although she understands the broad market, she is not as informed as her specialized executives, such as lead engineer (technical specifications) or salesperson (market demand). To solve this problem, the CEO must gather information from her team. She could collect this information using two different approaches. Either she could use a horizontal structure, where she meets individually with both the engineer and the salesperson, or she could use a vertical structure, where she meets a single employee, such as a Chief Operating Officer (COO), who has aggregated all the department heads' information before sharing it with her. Because all employees share the common objective of maximizing the firm's success, there are no misaligned incentives. Instead, this paper aims to understand the role that the CEO's scarcity of time and the varying communication abilities of her employees play in determining the optimal structure of the organization.

There is a large literature on organizational structure. Empirically there is evidence that firms have flattened \parencite{rajan2006FlatteningFirmEvidence}, potentially due to increasing product market competition \parencite{guadalupe2010FlatteningFirmProduct}, although organizational structures can vary depending on the function e.g. product versus administrative \parencite{guadalupe2014WhoLivesCSuite}. While much of the theoretical literature focuses on how organizational design mitigates strategic conflict and misaligned incentives \parencite{friebel2004AbuseAuthorityHierarchical,currarini2006DelegationCentralizationRole,migrow2021DesigningCommunicationHierarchies}, this paper isolates a purely operational channel. By analyzing a team-theoretic framework where all agents share a common objective \parencite{marschak1972EconomicTheoryTeams,dessein2016RationalInattentionOrganizational}, I abstract from incentive conflicts to highlight how the attention scarcity and heterogeneous communication skills shape the optimal flow of information. Even if workers have a common objective they may still have private information, for example, they may have different social networks or have expertise in different areas. This setup with a common objective and private information, reflects numerous situations within organizations. For instance, in the motivating example the CEO and her subordinates can have the same goal-- a successful product launch -- but they may be informed differently about how best to achieve that goal.

In order for a decision-maker of an organization to decide on a course of action where she is not fully informed, she can extract information from other potentially better informed employees. The organizational structure of the firm, by dictating who interacts with whom, can determine what and how this information is used. Individuals may communicate their ideas more or less clearly than each other and may not be given equal time to share their ideas, which can affect the decision made. Understanding how these two factors affect the information conveyed is important in deciding how best to organize employees so that optimal decisions are made for the firm. Clearly, how a manager spends her limited time within an organization and with whom she spends it will have an impact on the quality of the decision she makes \parencite[e.g.][]{bandiera2020CEOBehaviorFirm}. However, she may also have other responsibilities, such as raising outside investment in the firm, bidding for new contracts, or meeting stakeholders \parencite{porter2018HowCEOManage}. Hence, she is clearly time constrained and this model tries to highlight the role that communication and scarce attention play in the effectiveness of an organization. Consider the example of the CEO, there will be periods when the CEO is more pressed for time, and depending on who she hires, the ability of her team to articulate their information might not be the same. Both of these factors could affect the quality of the decision made. This paper studies how this scarcity of executive meeting time interacts with employee communication skills to determine whether a decision-maker should communicate directly with departmental leads in a flat structure or rely on an intermediary who aggregates information through a hierarchy.

Formally, I model the organization as a communication network under two distinct structures, horizontal and vertical. Two workers receive noisy signals about the state of the world, where the noise depends on their exogenous processing abilities, which can be thought of as an employee's technical expertise. The workers then communicate these signals to the decision-maker (DM) through meetings, where the clarity of the resulting report is a function of both the meeting's duration and the worker's communication ability, which represents their capacity to summarize and transmit complex information clearly. In the horizontal structure, the decision-maker meets with each worker separately. In the vertical structure, the decision-maker meets only with a designated intermediary, who first meets with the other worker to aggregate their information. Crucially, the decision-maker has limited attention, modeled as a strictly tighter constraint on her total meeting capacity relative to the workers. This framework allows us to analyze how the optimal organizational structure shifts as time becomes increasingly scarce for the head of the firm. 

I characterize these trade-offs in a baseline three-agent setting consisting of one decision-maker and two workers with homogeneous processing abilities. The first result (Result \ref{prop_same_xi}), shows that when workers have the same communication and processing ability, it is always optimal for the firm to have a horizontal organizational structure, regardless of the size of the excess meeting capacity of the other workers. The reason for this is that when communication is the same for each worker, then in the horizontal structure the DM gives optimal attention to the information coming from both workers. However, in the vertical structure, there is extra noise generated by the separate meeting of the workers, which the DM cannot reduce in her only meeting, and so she is unable to optimally weight the signals. Hence, the horizontal structure will always perform better by minimizing the noise to signal ratio. 
The second result (Result \ref{prop_diff_xi}), shows that when the workers have the same processing ability but differing communication abilities and the best communicator meets with the DM directly, then there is a large enough excess meeting capacity of workers for which it is optimal for the firm to switch from the horizontal to the vertical organizational structure. Now with differing communication abilities in the horizontal structure, to offset the noise from this differing ability, the DM pays more attention to the better communicator. This means the DM is not equally weighting the underlying signals and, as the processing ability of the workers has not changed, this is sub-optimal. However, this sub-optimal weighting can be overcome in the vertical structure if the workers meet for a sufficient amount of time to re-balance the weighting of the signals. In other words, there is a point at which the information transmission from the workers meeting is high enough that the preference flips from the horizontal to the vertical structure. Applying these results to the CEO and COO scenario, it is clear that if the CEO has a quiet period it makes sense to speak directly to all her team leads in order to make a decision. But, if she is busy and so has limited time, it makes sense for the COO, to speak to all the team leads, aggregate the information, and then communicate this to the CEO.

I subsequently show in Result (\ref{prop:asymmetric_rhos}) that the assumption of homogeneous processing abilities is not necessary in the baseline three-agent model. Finally, in Result (\ref{prop_general}) I show that the main insight of Results (\ref{prop_same_xi}) and (\ref{prop_diff_xi}), with homogeneous processing abilities, can be extended to the to \(n\)-agents.

\section{Relation to Literature}
\label{litrev}

This paper bridges several distinct strands of the organizational economics literature by embedding Bayesian signal extraction into a team-theoretic model of firm structure.

First, this paper contributes to the literature on information processing within organizations. Organizations have been modeled as hierarchies to minimize the time employees spend processing information, leading to specialized communication networks \parencite{radner1992HierarchyEconomicsManaging,bolton1994FirmCommunicationNetwork} or to efficiently acquire and route the knowledge necessary to solve production problems \parencite{garicano2000HierarchiesOrganizationKnowledge}. These studies emphasize hierarchical structures as an outcome to  efficient solve problems, they largely treat organizational links as frictionless. In contrast, this paper which models transmission noise as an endogenous function of meeting duration and worker-specific communication ability, isolating how managerial attention bottlenecks dictate the optimal structure of internal communication. More recent work has studied managerial attention constraints within organizations \parencite{dessein2016RationalInattentionOrganizational}. Although the focus is on how rational inattention limits organizational coordination, rather than the optimal organizational structure given attention constraints.

Second, by adopting a team-theoretic approach in the tradition of \textcite{marschak1972EconomicTheoryTeams}, this paper separates operational communication frictions from strategic misalignment. A large body of work focuses on how specific structures help overcome misaligned incentives and agency problems between workers and decision-makers \parencite{friebel2004AbuseAuthorityHierarchical,currarini2006DelegationCentralizationRole,migrow2021DesigningCommunicationHierarchies}. While strategic distortion is a central concern in organizational design, focusing on incentive conflicts can obscure purely operational trade-offs. Setting aside strategic misrepresentation demonstrates that communication heterogeneity and managerial time scarcity alone are sufficient to generate structural switches between flat and hierarchical communication channels, even among fully aligned agents. \textcite{geanakoplos1991TheoryHierarchiesBased} also adopt a team-theoretic approach to studying organizational hierarchies, they focus on optimal hierarchies for costly knowledge acquisition with potentially heterogeneous skilled managers, rather than the role of communication frictions and limited managerial attention. 

Finally, the underlying information structure in this model builds upon the Bayesian approach to signal extraction with quadratic loss functions \parencite[][]{kay2013FundamentalsStatisticalSignal,morris2002SocialValuePublic}. Within this setting, prior studies investigate how agents invest attention to coordinate actions or capture peer influence in social and political networks \parencite{calvo-armengol2015CommunicationInfluenceCommunication,dewan2007LeadingPartyCoordination}. This paper adapts that machinery to organizational networks, endogenizing the precision weights chosen by a constrained decision-maker and demonstrating how intermediate aggregators optimally synthesize multilayered signals.

Finally, this theoretical framework provides an informational micro-foundation for empirical patterns in executive time use and organizational flattening. Empirical studies document widespread flattening of corporate hierarchies in response to competitive shocks \parencite{rajan2006FlatteningFirmEvidence,guadalupe2010FlatteningFirmProduct}, as well as significant functional variation across administrative and product divisions \parencite{guadalupe2014WhoLivesCSuite}. Also, data on executive time allocation reveal that top managers face acute time scarcity, balancing internal operational meetings against critical external demands \parencite{mintzberg1973NatureManagerialWork,kotter1990WhatLeadersReally,porter2018HowCEOManage,bandiera2020CEOBehaviorFirm}. Management literature similarly debates whether flat structures enhance responsiveness or create operational paralysis in scaling firms \parencite{drucker1988ComingNewOrganization,drucker1988ComingNewOrganization, puranam2015ValveWay, burton2017GitHubExploringSpace, lee2017SelfmanagingOrganizationsExploring,lee2022MythFlatStart}. This paper rationalizes these observed patterns, flat structures are optimal when executives enjoy ample internal meeting time, but hierarchical aggregation becomes indispensable when managerial attention is scarce and technical specialists require articulation through skilled intermediaries.

\section{Model}
\label{model}
\textbf{Agents and Structure.} There are two types of agents; A DM who is the head of the organization and workers who are subordinates to the DM. I assume, for now, that there are only two workers. This small number is sufficient to demonstrate my main findings, later I will show that the model can be extended to a more general setting with $n$ workers. 

I consider two organizational structures; First, a horizontal structure, where the DM is directly connected to the workers (see Figure \ref{fig:central}). Then a vertical structure, where the DM is directly connected to a single worker, who then is also connected to the other worker (see Figure \ref{fig:decentral}). The arrowhead indicates the direction in which the information is flowing. 

\begin{figure}[h]
\centering
    \begin{subfigure}[b]{0.45\textwidth}
        \centering
        \begin{tikzpicture}[node distance = 1cm, auto]
            \node [node] at (0,0) (DMa) {DM};
            \node [node] at (-2,0) (onea) {1};
            \node [node] at (2,0) (twoa) {2};
            \draw[->] (twoa) -- (DMa);
            \draw[->] (onea) -- (DMa);
        \end{tikzpicture}
        \caption{Horizontal structure.} \label{fig:central}
    \end{subfigure} \qquad 
    \begin{subfigure}[b]{0.45\textwidth}
        \centering
        \begin{tikzpicture}[node distance = 1cm, auto]
            \node [node] at (0,0) (DM) {DM};
            \node [node] at (0,-1.4) (one) {1};
            \node [node] at (0,-2.6) (two) {2};
            \draw[->] (two) -- (one);
            \draw[->] (one) -- (DM);
        \end{tikzpicture}
        \caption{Vertical structure.} \label{fig:decentral}
    \end{subfigure}
    
    \caption{Organizational Structures} \label{fig:main_structures}
\end{figure}

\noindent \textbf{Information.} Each agent is endowed with a fixed amount of meeting time $\kappa$. All agents in the organization share a common prior belief about the unknown state of the world $\theta$. This prior is normally distributed as follows: $\theta \thicksim   \mathcal{N}(\mu_\theta, \sigma^2_\theta)$. Workers receive an individual signal about the state of the world with some normally distributed noise $\epsilon_i$ for $i=\{1,2\}$. The parameter $\rho_i > 0$ captures the processing ability of the worker, the higher $\rho_i$ the lower the variance of the error term $\epsilon_i$ and the more accurate the signal. Hence, the more able the worker is in processing the signal they receive, the clearer the signal is. The $\rho_i's$ are taken as exogenous. The information contained within each signal for the worker $i$ is given by \(
s_i = \theta + \epsilon_i, \text{where}\  \epsilon_i \thicksim \mathcal{N} \left( 0 , \frac{1}{\rho_i}\right)\).

All agents in the organization share a common prior belief about the unknown state of the world $\theta \sim \mathcal{N}(\mu_\theta, \sigma_\theta^2)$. Each subordinate worker $i \in {1, 2}$ receives a private noisy signal $s_i = \theta + \epsilon_i$, where the error term $\epsilon_i \sim \mathcal{N}(0, 1/\rho_i)$ is independent across workers and the parameter $\rho_i > 0$ represents worker $i$'s exogenous technical processing ability. The activity of estimating the true $\theta$ (Section \ref{model}) can be thought of as the primary activity, such as production decisions and internal promotions.

Every agent in the organization is endowed with a total working time capacity $\kappa > 0$. However, agents differ in their internal availability. While subordinate workers dedicate their entire endowment $\kappa$ to internal operational communication, the DM must divide her time between internal oversight and essential external demands, such as securing outside capital, negotiating supplier contracts, or managing external stakeholders. Consequently, the DM faces an effective internal meeting capacity constraint $\bar{m}$, where $0 < \bar{m} < \kappa$. Managerial attention is therefore strictly scarcer than subordinate operational time, establishing a structural bottleneck at the apex of the organization. This is consistent with evidence that the higher up the organization an employee is, the more likely they are to have a greater number of responsibilities than the organizations junior colleagues \parencite{mintzberg1973NatureManagerialWork,kotter1990WhatLeadersReally}. 

This is not meant to be a comprehensive characterization, as clearly there could be some interdependence between the primary activity and other external activities. It is only meant to illustrate that a DM has a fixed amount of time to spend on all the activities her job may entail, and how she allocates this time has consequences for the firm.

\section{Horizontal Structure}

\noindent \textbf{Meeting Time.} Given the horizontal structure (see Figure \ref{fig:central}), workers can interact with the DM, but cannot do so with each other. As the workers are assumed to have the same objective as the DM, the workers do not try to bias the DM's decision. However, their signal is communicated to the DM along with some additional normally distributed noise $\eta_i$. I will refer to this as the report of the meeting. The DM can allocate varying amounts of meeting time $m_i \geq 0$, to determine the clarity of the report by affecting the variance of the noise term $\eta_i$. For example, the interaction between the DM and worker 1 can be thought of as a meeting and $m_1$ captures the duration of that meeting. The maximum meeting time that can be chosen is constrained by the DM's meeting time to $\bar{m}$.

In addition to meeting time, the clarity of the report is affected by a communication friction parameter, $\xi_i^2$. Economically, this parameter captures the ``social distance'' or lack of shared context between the worker and the DM. For instance, a high $\xi_i^2$ could reflect a highly specialized technical worker who struggles to translate domain-specific jargon into terms the DM can easily process, whereas a low $\xi_i^2$ represents a worker with high organizational literacy or a shared professional background with the DM. This communication friction plays a pivotal role in determining the optimal organizational structure. The reports are of the following form \(r_i = s_i + \eta_i,\ \text{where}\ \eta_i \thicksim \mathcal{N} \left( 0 , \frac{\xi^2_i}{m_i} \right)\). The random variables $\{\theta, \epsilon_i,\eta_i\}$ are all independent of each other. For instance, for workers $1$ and $2$ the co-variance of the error terms in their signals would be zero.  \\

\noindent \textbf{Action and Payoffs.} 
The DM maximizes expected organizational payoff by minimizing expected quadratic loss over a two-stage decision process. Ex-ante, prior to the realization of states and signals, the decision-maker chooses the allocation of meeting durations $(m_1, m_2)$ to minimize the expected variance of the posterior estimate subject to her aggregate meeting capacity. Ex-post, upon observing the realized report vector $\mathbf{r} = (r_1, r_2)$, she selects an action $a \in \mathbb{R}$ conditional on her information set, yielding the standard Bayesian estimate $a^*(\mathbf{r}) = \mathbb{E}[\theta \mid \mathbf{r}]$. Ex-ante, prior to the realization of the state and transmission shocks, the decision-maker chooses meeting durations $(m_1, m_2)$ to minimize the integrated quadratic risk subject to the managerial attention budget. Formally, the ex-ante optimization problem is given by 
\begin{equation}
\label{DM_optimization}
	\begin{split}
		\max_{m_1,m_2\geq0} &\bar{u}-\mathbb{E} \big[ (\theta -a^*(\mathbf{r}))^2 | \big] \\
		 & \text{s.t.} \hspace{3mm} m_1+m_2 \leq \bar{m} .
	\end{split}
\end{equation}
To ensure well-defined precision measures across all feasible allocations, including potential corner solutions where $m_i = 0$, the effective precision of worker $i$'s communication channel is defined continuously on $\mathbb{R}+$ by $h_i(m_i) = \frac{\rho_i m_i}{\xi_i^2 \rho_i + m_i}$ for $m_i > 0$, with continuous extension $h_i(0) \equiv \lim{m_i \to 0^+} h_i(m_i) = 0$. When $m_i = 0$, the DM conducts no meeting with worker $i$, the transmission noise variance tends to infinity, and worker $i$'s report carries zero precision weight in the posterior expectation.\\

\noindent \textbf{Timing.} The DM decides on the length of the meetings $\textbf{m}$. The workers receive their signal, whose accuracy is determined by the given $\rho_i$. The workers then meet the DM and a report is produced, for which the clarity depends on the chosen $m_i$ and given $\xi_i^2$. The DM observes the vector of reports $\mathbf{r}$ and then decides on an action $a$ which maximizes her utility $U_H$. \\

\noindent The expected payoff of the DM is $U_H = \bar{u} - \mathbb{E} \big[ (\theta -a)^2 | \mathbf{r}\big] $ and assuming an arbitrary meeting allocation $\textbf{m}$ and reports $\mathbf{r}$, the first-order condition with respect to $a$ yields $a^*=\mathbb{E} [\theta | \mathbf{r}]$. Given the linear structure of the messages, the normally distributed error terms, and that the posterior PDF $p\left(\theta|\mathbf{r}\right)$ is Gaussian, the optimal action $a^*$ will be a linear action. The linear structure of $a^*$ and the independence of the error terms simplifies her objective function, as given in the following proposition. Also, assume that the DM is uninformed i.e. has a diffuse prior $h_0 \to 0$ (or equivalently, $\sigma_\theta^2 \to \infty$). In this limit, the prior precision vanishes, yielding the characterized linear action rule and expected utility

\begin{proposition}
\label{EquilActCentral}
Let there be an arbitrary allocation of meeting time $\{m_1, m_2\}$ and a diffuse prior $\sigma_\theta^2\rightarrow\infty$. Then, given the reports $\{r_1,r_2\}$, the optimal action and the expected payoff of the DM is given by 
\begin{equation*}
	\begin{split}
		a^*=\sum_1^2\omega_i r_i \ \text{\&} \ U_H=\bar{u}-\frac{1}{\sum_1^2 h_i},\ \text{where}\ \omega_i = \frac{\frac{ \rho_i m_i}{\xi^2_i\rho_i + m_i}}{\sum_{i=1}^2 \frac{ \rho_i m_i}{\xi^2_i\rho_i + m_i}}\ \text{\&}\ h_i=\frac{\rho_i m_i}{\xi^2_i\rho_i + m_i} \quad \forall i=\{1,2\}
	\end{split}
\end{equation*}	
\end{proposition}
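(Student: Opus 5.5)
The plan is to reduce each report to a noisy observation of $\theta$ and then apply the standard Gaussian updating formula.

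\textbf{Step 1: each report is a signal of $\theta$.} Substituting $s_i=\theta+\epsilon_i$ gives $r_i=\theta+\epsilon_i+\eta_i$. Here $\epsilon_i$ and $\eta_i$ are independent, mean-zero normals that are also independent of $\theta$. So the total noise $u_i\equiv\epsilon_i+\eta_i$ has variance
\[
\frac{1}{\rho_i}+\frac{\xi_i^2}{m_i}=\frac{m_i+\xi_i^2\rho_i}{\rho_i m_i}.
\]
Its precision is therefore $h_i=\rho_i m_i/(\xi_i^2\rho_i+m_i)$. Because all primitives are mutually independent, $u_1$ and $u_2$ are independent of each other and of $\theta$.

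\textbf{Step 2: the posterior.} With prior precision $h_0=1/\sigma_\theta^2$, conjugate normal updating gives a normal posterior $p(\theta\mid\mathbf r)$. Its mean and variance are
\[
\mathbb E[\theta\mid\mathbf r]=\frac{h_0\mu_\theta+\sum_i h_i r_i}{h_0+\sum_i h_i},\qquad
\operatorname{Var}(\theta\mid\mathbf r)=\frac{1}{h_0+\sum_i h_i}.
\]
This can be shown in either of two ways:
\begin{itemize}
\item complete the square in the log of $p(\theta)\prod_i p(r_i\mid\theta)$;
\item compute the linear projection, since $(\theta,r_1,r_2)$ are jointly Gaussian.
\end{itemize}
As argued before the statement, the first-order condition makes $a^*=\mathbb E[\theta\mid\mathbf r]$ optimal. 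The conditional loss at $a^*$ is then the posterior variance, which does not depend on $\mathbf r$. Hence the ex-ante expected loss equals it as well.

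\textbf{Step 3: the diffuse-prior limit.} Let $h_0\to0$. The prior term drops out, leaving $a^*=\sum_i\omega_i r_i$ with $\omega_i=h_i/\sum_j h_j$, and $U_H=\bar u-1/\sum_i h_i$, as claimed.

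\textbf{Main obstacle: corner allocations.} The only delicate step is an allocation with some $m_i=0$. There the report variance is infinite, and we use the convention $h_i(0)=0$. The fix is to note that $h_i(m_i)\to0$ as $m_i\to0^+$. Both the weights and the loss are continuous in $(h_1,h_2)$ wherever $\sum h_i>0$, so the formula extends with worker $i$ receiving zero weight.

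If both meetings are zero under a diffuse prior, then $\sum h_i=0$ and the expected loss is infinite, which matches $1/\sum h_i=\infty$. This case is never optimal, since $\bar m>0$ allows a positive $h_i$. It should be excluded explicitly, or interpreted via the limit, so that the statement stays well defined.
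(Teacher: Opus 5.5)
Your proposal is correct and follows essentially the same route as the paper: you write $r_i=\theta+\epsilon_i+\eta_i$ with precision $h_i$, apply conjugate normal updating, and pass to the diffuse limit $h_0\to 0$. The only minor difference is that you obtain the loss as the posterior variance $1/(h_0+\sum_i h_i)$ before taking the limit, whereas the paper takes the limit first and computes $\mathbb{E}[(\sum_i\omega_i(\epsilon_i+\eta_i))^2]=\sum_i\omega_i^2/h_i=1/\sum_i h_i$ directly; your treatment of the $m_1=m_2=0$ case is a useful addition that the paper omits.
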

\noindent The proof for this and all subsequent results are found in the Appendix \ref{appA}. This result shows that the optimal action for the DM is a weighted linear combination of the reports she receives. Here, the weight $\omega_i$ is the relative precision that the DM assigns to a given report. The utility of DM increases with the precision weight $h_i$ of each worker's report.

\section{Vertical Structure}

\noindent \textbf{Information and Meeting Time.} In the vertical structure, information aggregation occurs sequentially. In the vertical structure the DM can no longer meet with worker 2, but only with worker 1 (see Figure \ref{fig:decentral}). Thus, she now decides on how long she should meet with worker 1, $m_1$, and how long worker 1 should meet with worker 2, $m_2$. Now, the DM and worker 1's meeting is constrained to \(m_1\leq \bar{m}\), whereas \(m_2\) is only subject to the constraint that $\sum_1^2 m_i \leq \kappa$. One can interpret \(\kappa\) as the working day. Information is passed through the organization as follows, worker 1 processes the information available to him, after his meeting with worker 2, and then transmits this to the DM in their meeting. The other details, such as the independence of the error terms, are kept the same.  \\

\noindent \textbf{Timing, Action and Payoffs.} The DM decides on the length of the meetings $\textbf{m}$. The workers receive their signal, whose accuracy is determined by the given $\rho_i$. The worker 1 and 2 meet and produce a report. Then worker 1 meets the DM and produces a report based on the information available. These reports depend on the chosen $m_i$ and given $\xi_i^2$. The DM observes the final report, worker 1's potentially incorporating worker 2's, and then decides on an action $a$ which maximizes her utility. The report of the meeting of worker 2 with worker 1, $r_{2,1}$, will be the signal of worker 2 and some normally distributed noise $\epsilon_2$ depending on their processing ability $\rho_2$, plus some normally distributed noise $\eta_2$, dependent on the amount of time they were allocated $m_2$ and their communication ability $\xi^2_2$. Due to the independence of the error terms, this report has the following structure \(r_{2,1}= \theta+ \mathcal{N} \big( 0 ,  \frac{1}{\rho_{2}}+\frac{ \xi^2_2}{ m_{2}}\big).\)  

Worker 1's information set will be their own signal $s_1$ and report $r_{2,1}$. Worker 1's message is required to be an unbiased linear estimate of
$\theta$. Thus, write
\begin{equation*}
    M_1(s_1,r_{2,1})=q s_1+(1-q)r_{2,1},
\end{equation*}
so the signal coefficients sum to one, $q\in[0,1]$. The intercept is set to zero without loss of generality. This normalization fixes the scale of the message relative to the additive transmission noise. Worker 1 then transmits a message $M_1(s_1, r_{2,1})$ to the decision-maker through a meeting of duration $m_1$, which adds transmission noise $\eta_1 \sim \mathcal{N}(0, \xi_1^2 / m_1)$, generating the final report $r_{1,dm} = M_1(s_1, r_{2,1}) + \eta_1$. Because all organizational members share the common objective of maximizing expected utility \parencite{marschak1972EconomicTheoryTeams}, worker 1 chooses \((q,1-q)\) to minimize the DM's expected quadratic loss. The following proposition jointly establishes worker 1's optimal reporting strategy and the DM's optimal action rule. 

\begin{proposition}
\label{EquilActDecent}
Let there be an arbitrary allocation of meeting time $\{m_1, m_2\}$ and diffuse prior $\sigma_\theta^2 \rightarrow\infty$. Worker 1's optimal aggregation strategy is the conditional expectation of the state given his available information \(M_1(s_1, r_{2,1}) = \mathbb{E}[\theta \mid s_1, r_{2,1}] = q_{1,1} s_1 + q_{1,2} r_{2,1},\) where the optimal weights are proportional to signal precisions 
\begin{equation*}
q_{1,1} = \frac{\rho_1}{\rho_1 + \frac{m_2 \rho_2}{m_2 + \xi_2^2 \rho_2}}, \quad q_{1,2} = \frac{\frac{m_2 \rho_2}{m_2 + \xi_2^2 \rho_2}}{\rho_1 + \frac{m_2 \rho_2}{m_2 + \xi_2^2 \rho_2}}
\end{equation*}
The DM's optimal action is $a^*(r_{1,dm}) = r_{1,dm}$, and the resulting expected payoff is 

\begin{equation*}
    U_V = \bar{u} - \left( \frac{1}{\rho_1 + \frac{m_2 \rho_2}{m_2 + \xi_2^2 \rho_2}} + \frac{\xi_1^2}{m_1} \right)
\end{equation*}
 	
\end{proposition}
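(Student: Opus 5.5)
The proof proceeds by backward reasoning through the vertical chain, treating each stage as a Gaussian signal-extraction problem under a diffuse prior, exactly as in Proposition \ref{EquilActCentral}. I start with worker 1's information. Substituting the signal structure gives $s_1=\theta+\epsilon_1$ with variance $1/\rho_1$ and $r_{2,1}=\theta+\epsilon_2+\eta_2$ with variance $1/\rho_2+\xi_2^2/m_2$. The precision of $r_{2,1}$ is therefore $\left(1/\rho_2+\xi_2^2/m_2\right)^{-1}=\frac{m_2\rho_2}{m_2+\xi_2^2\rho_2}=h_2(m_2)$, with the convention $h_2(0)=0$ as in the horizontal section.

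Under the restriction $M_1=qs_1+(1-q)r_{2,1}$, the message equals $\theta$ plus a zero-mean error $q\epsilon_1+(1-q)(\epsilon_2+\eta_2)$, so $M_1$ is an unbiased noisy copy of $\theta$. The final report is $r_{1,dm}=\theta+e(q)$ with
\begin{equation*}
\operatorname{Var}(e(q))=\frac{q^2}{\rho_1}+\frac{(1-q)^2}{h_2}+\frac{\xi_1^2}{m_1}.
\end{equation*}
Because $\eta_1$ is independent of the other errors and its variance does not depend on $q$, the common objective reduces to minimizing the first two terms over $q\in[0,1]$. This is a strictly convex quadratic. Its first-order condition $q/\rho_1=(1-q)/h_2$ gives $q_{1,1}=\rho_1/(\rho_1+h_2)$ and $q_{1,2}=h_2/(\rho_1+h_2)$, which are the stated inverse-variance weights. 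These weights also coincide with $\mathbb{E}[\theta\mid s_1,r_{2,1}]$ under the diffuse prior, by the same linear-Gaussian posterior formula used for Proposition \ref{EquilActCentral} with the prior precision $h_0\to0$. Substituting back, the minimized variance of the first two terms is $1/(\rho_1+h_2)$.

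For the DM stage, she observes the single scalar $r_{1,dm}=\theta+e$, where $e$ is Gaussian, independent of $\theta$, and has variance $V=1/(\rho_1+h_2)+\xi_1^2/m_1$. With prior $\mathcal N(\mu_\theta,\sigma_\theta^2)$, the posterior mean is
\begin{equation*}
\frac{\sigma_\theta^2}{\sigma_\theta^2+V}\,r_{1,dm}+\frac{V}{\sigma_\theta^2+V}\,\mu_\theta,
\end{equation*}
which tends to $r_{1,dm}$ as $\sigma_\theta^2\to\infty$. The posterior variance $\left(1/\sigma_\theta^2+1/V\right)^{-1}$ tends to $V$. Hence $a^*=r_{1,dm}$, the expected loss is $\mathbb{E}[(\theta-r_{1,dm})^2]=V$, and $U_V=\bar u-V$, which is the stated expression.

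I expect the main obstacle to be justifying that worker 1's team-optimal choice equals his own conditional expectation. The reason is that the DM's subsequent loss depends on $q$ only through the variance of the message error, and the independent additive noise $\eta_1$ does not interact with $q$. I would make this explicit by showing that the DM's loss decomposes additively into the aggregation error and $\xi_1^2/m_1$. This also covers the corner case $m_2=0$: there $h_2=0$, so $q_{1,1}=1$ and the payoff is $\bar u-(1/\rho_1+\xi_1^2/m_1)$.
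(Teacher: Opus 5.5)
Your proposal is correct and follows essentially the same route as the paper. Both restrict worker 1 to $M_1=qs_1+(1-q)r_{2,1}$, take $a^*=r_{1,dm}$ in the diffuse-prior limit, split the loss by independence into $q^2/\rho_1+(1-q)^2/h_{1,2}+\xi_1^2/m_1$, and minimize this strictly convex quadratic to get the inverse-precision weights and $U_V$; your explicit finite-prior limit for the DM's posterior and your handling of the $m_2=0$ corner are small additions the paper leaves implicit.
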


Because all members share the common objective of maximizing organizational payoff, worker 1 selects an unbiased linear transmission rule $M_1(s_1, r_{2,1}) = q_{1,1}s_1 + q_{1,2}r_{2,1}$ with $q_{1,1} + q_{1,2} = 1$, and the decision-maker selects an action $a^*(r_{1,dm})$ to minimize the expected quadratic loss $\mathbb{E}[(\theta - a)^2]$. Worker 1 constructs precision measures for each piece of information, treating his private signal as the first observation and the report from worker 2 as the second observation. The precision measures are $h_{1,1} = \rho_1$ and $h_{1,2} = \frac{m_2\rho_2}{m_2 + \xi_2^2\rho_2}$, where $h_{1,2}$ is the inverse variance of $r_{2,1}$. Under a diffuse common prior ($\sigma_\theta^2 \to \infty$), the prior precision is zero, so worker 1’s posterior belief relies entirely on the observed signals with relative weights $q_{1,1} = \frac{h_{1,1}}{h_{1,1} + h_{1,2}}$ and $q_{1,2} = \frac{h_{1,2}}{h_{1,1} + h_{1,2}}$. Worker 1 then transmits this posterior estimate to the decision-maker through a meeting of duration $m_1$, generating the final report $r_{1,dm} = \mathbb{E}[\theta \mid s_1, r_{2,1}] + \eta_1$, which yields the vertical expected payoff $U_V$. See Figure \ref{fig:timing_noise} for a summary of the timing and communication in each organization structure in this setup.

\begin{figure}[htbp]
\centering
\begin{subfigure}[t]{0.48\textwidth}
\centering
\begin{tikzpicture}[>=Latex, node distance=0.65cm]
\node[draw, rounded corners, align=center, text width=3.5cm] (h0) {DM chooses $m_1,m_2$\\$m_1+m_2\leq\bar m$};
\node[draw, rounded corners, align=center, text width=3.5cm, below=of h0] (h1) {Workers observe\\$s_i=\theta+\epsilon_i$};
\node[draw, rounded corners, align=center, text width=3.5cm, below=of h1] (h2) {Direct reports\\$r_i=s_i+\eta_i$\\$\mathrm{Var}(\eta_i)=\xi_i^2/m_i$};
\node[draw, rounded corners, align=center, text width=3.5cm, below=of h2] (h3) {DM observes $(r_1,r_2)$\\and chooses $a$};
\draw[->] (h0)--(h1);
\draw[->] (h1)--(h2);
\draw[->] (h2)--(h3);
\end{tikzpicture}
\caption{Horizontal timing.}
\end{subfigure}
\hfill
\begin{subfigure}[t]{0.48\textwidth}
\centering
\begin{tikzpicture}[>=Latex, node distance=0.55cm]
\node[draw, rounded corners, align=center, text width=3.7cm] (v0) {DM chooses $m_1,m_2$\\$m_1\leq\bar m$, $m_1+m_2\leq\kappa$};
\node[draw, rounded corners, align=center, text width=3.7cm, below=of v0] (v1) {Workers observe\\$s_i=\theta+\epsilon_i$};
\node[draw, rounded corners, align=center, text width=3.7cm, below=of v1] (v2) {Worker 2 reports to worker 1\\$r_{2,1}=s_2+\eta_2$};
\node[draw, rounded corners, align=center, text width=3.7cm, below=of v2] (v3) {Worker 1 forms\\$\mathbb{E}[\theta\mid s_1,r_{2,1}]$};
\node[draw, rounded corners, align=center, text width=3.7cm, below=of v3] (v4) {Worker 1 reports to DM\\$r_{1,dm}=\mathbb{E}[\theta\mid s_1,r_{2,1}]+\eta_1$};
\node[draw, rounded corners, align=center, text width=3.7cm, below=of v4] (v5) {DM observes $r_{1,dm}$\\and chooses $a$};
\draw[->] (v0)--(v1);
\draw[->] (v1)--(v2);
\draw[->] (v2)--(v3);
\draw[->] (v3)--(v4);
\draw[->] (v4)--(v5);
\end{tikzpicture}
\caption{Vertical timing.}
\end{subfigure}
\caption{\textit{Notes}: Timing and communication channels. Each transmission by worker $i$ adds variance $\xi_i^2/m_i$.}
\label{fig:timing_noise}
\end{figure}

It should be noted that in this model, I have not imposed any restrictions on the minimum time needed to meet to understand a report, which would have implications for the preferred structure. For example, if it was the case that a person needed to be in a meeting for 30 minutes to understand a report, then \(\bar{m}=30\) minutes. In the horizontal structure each of the DM's meetings would be 15 minutes, whereas in the vertical structure her only meeting would be 30 minutes. Given the lower bound on learning, it is clear that it would be better to have one meeting of 30 minutes than two of 15 minutes. However, I do not include this because I want to highlight the role that the interaction of scarcity of attention and communication ability plays in determining which organizational structure is better for information transmission.

\section{Results}
\label{results}
\begin{result}
\label{prop_same_xi}
Suppose $\rho_1=\rho_2=\rho$. If communication ability is symmetric, $\xi_1^2=\xi_2^2=\xi^2$, then for every $0<\bar m<\kappa<\infty$, the horizontal structure strictly dominates the vertical structure. As $\kappa\to\infty$ the communication noise in the structures converge.

\end{result}

\noindent The implication of Result \ref{prop_same_xi} is that in the vertical setting, even if worker 1 and worker 2 meet for a \textit{very} long time to minimize the communication noise between them, the performance of this structure is still worse than that of the horizontal one. This is surprising since in the vertical structure the total time spent in meetings can be arbitrarily greater than in the horizontal structure, which is restricted to \(\bar{m}\), so it might seem that communication noise would be less in the vertical structure, as more time can be spent by worker 1 and worker 2 in their meeting to produce as clear a report as possible. Figure \ref{fig:same_ability} shows the difference in payoff between the two structures when communication ability (\(\xi\)) and processing ability (\(\rho\)) are the same. It illustrates that as $\kappa$ increases and the worker excess meeting time increases, this decreases the payoff difference between the two structures, which converge in the limit.

\begin{figure}[htbp]
	\centering
	\begin{subfigure}[t]{0.49\textwidth}
		\centering
		\includegraphics[width=\textwidth]{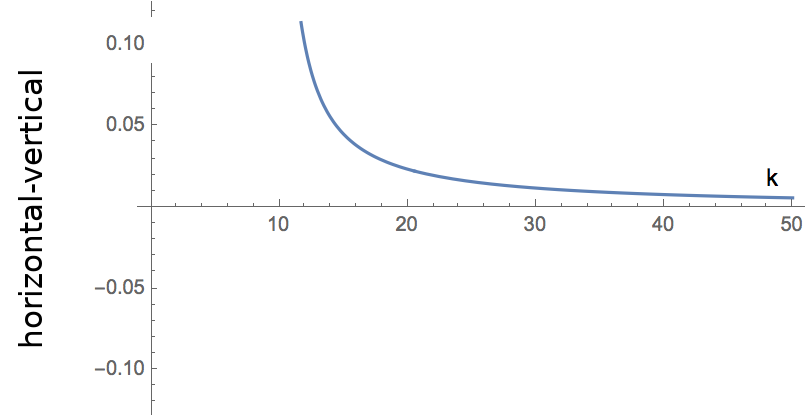}
		\caption{Parameters: $\{\bar{m}, \rho, \xi^2_1, \xi^2_2 \}=\{10,1,1,1 \}$.}
		\label{fig:same_ability}
	\end{subfigure}
	\hfill 	\begin{subfigure}[t]{0.49\textwidth}
		\centering
		\includegraphics[width=\textwidth]{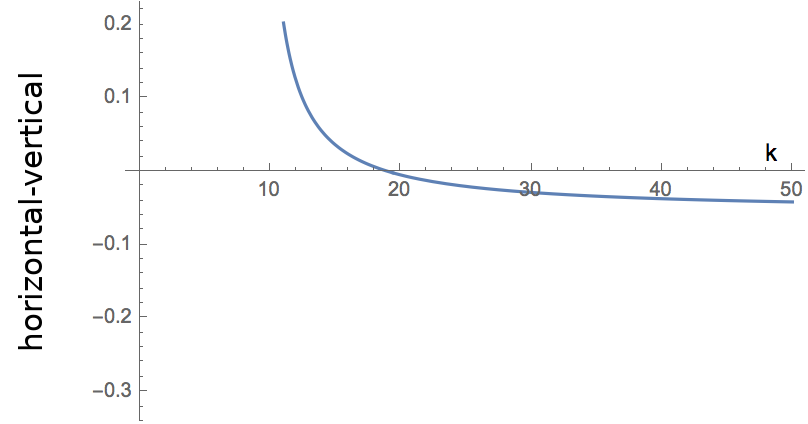}
		\caption{Parameters: $\{\bar{m}, \rho, \xi^2_1, \xi^2_2 \}=\{10,1,1,1.5 \}$.}
		\label{fig:diff_ability}
	\end{subfigure}
	
	\caption{\textit{Notes}: These figures plot the difference in expected payoff between the horizontal and vertical structures ($U_H - U_V$) on the y-axis against the total available meeting time ($\kappa$) on the x-axis. In (a) as workers have identical communication abilities, $\xi^2_1= \xi^2_2$, the payoff difference remains strictly positive. The horizontal structure dominates the vertical structure for all values of $\kappa$, as the DM can optimally weight the signals without relying on an intermediate aggregator. In (b) as worker 1 is a better communicator, $\xi^2_1< \xi^2_2$, there is a threshold point $\bar{\kappa}$ above which the payoff difference becomes negative. For $\kappa > \bar{\kappa}$, the vertical structure outperforms the horizontal as worker 1 has sufficient time to extract signal form the noisy communication of worker 2.}
	\label{fig:overall_organization_plots} \end{figure}

However, if I alter the communication ability so that worker 1 is relatively better at communicating than worker 2. Then, there exists point at which the DM's capacity constraint means the vertical structure is preferred to the horizontal one, as the following proposition states.
\begin{result}
\label{prop_diff_xi}
Suppose $\rho_1=\rho_2=\rho$ and $0<\xi_1<\xi_2$. Let $d\equiv \xi_2-\xi_1$. The vertical structure satisfies $m_1\leq \bar m$ and $m_1+m_2\leq\kappa$. Its optimal allocation is
\begin{equation}
m_1^*(\kappa)=\min\left\{\bar m,\, \frac{\xi_1(2\kappa+\rho\xi_2^2)}{2\xi_1+\xi_2}\right\}, \qquad m_2^*(\kappa)=\kappa-m_1^*(\kappa).
\end{equation}

\begin{enumerate}
    \item If $\bar m\leq \rho\xi_1d$, the horizontal allocation is a corner solution: $m_1^H=\bar m$ and $m_2^H=0$. The two structures have equal payoffs at $\kappa=\bar m$, and the vertical structure strictly dominates for every $\kappa>\bar m$. In this case define $\bar\kappa=\bar m$.

    \item If $\bar m>\rho\xi_1d$, the horizontal allocation is interior. Set
\begin{equation}
\bar\kappa=
\begin{cases}
\kappa_I, & \kappa_I\leq\kappa_B,\\
\kappa_C, & \kappa_I>\kappa_B.
\end{cases}
\end{equation}
Then $\bar\kappa>\bar m$. The two structures have equal payoffs at $\kappa=\bar\kappa$, and the vertical structure strictly dominates the horizontal structure if and only if $\kappa>\bar\kappa$.
\end{enumerate}

\end{result}

\noindent These two results reflect the interplay between managerial attention and communication. To understand the intuition for the results, first consider Result \ref{prop_same_xi}. The environment be symmetric in all respects ($\rho_1=\rho_2$ and $\xi_1^2=\xi_2^2$). Hence, given their identical communication abilities of the workers, in the horizontal structure the DM gives equal attention to them both, e.g. the meeting time is \(\bar{m}/2\) for both workers. Note that the available meeting time is exhausted as the DM's utility is increasing in meeting time. This means that she is equally weighting the underlying signals of the reports, which given their identical processing ability is optimal. However, in the vertical structure the DM is sub-optimally weighting the signals as she is giving less weight to worker 2's signal. This is true even as the workers are given relatively more time to meet, e.g. as \(\kappa\) increases. This is because the noise generated from the report between the workers is inversely proportional to the length of their meeting and for a finite \(\kappa\) it will not be eliminated. Hence, the noise-to-signal ratio is always higher in the vertical organization. Moreover, given the constraint on the organizational meeting time, the more time is allocated to workers' meeting to improve the clarity of that report, the less time is allocated to the DM's only meeting, which reduces the clarity of the aggregated report given to the DM.

For Result \ref{prop_diff_xi} worker 1 faces lower communication frictions than worker 2 ($\xi_1^2 < \xi_2^2$). In a real-world organization, worker 1 might represent a generalist manager who shares a common vocabulary with the DM, while worker 2 represents a technical specialist -- such as a lead engineer or salesperson -- whose raw insights are highly accurate but difficult for the DM to parse directly. The DM's direct meetings with all workers forces her to allocate more of her constrained time \(\bar{m}\) towards worker 1 to reduce the distortion or noise in the information communicated to her. When the managerial attention is severely scarce (\(\bar{m}\leq \rho \xi_1(\xi_2-\xi_1)\)), the distortion is large enough that it is optimal to completely ignore worker 2 and only meet worker 1. The vertical structure overcomes this by delegating the time-intensive task to the meeting between worker 1 and 2, leveraging their excess meeting capacity \(\kappa-\bar{m}\) to incorporate worker 2's information into worker 1's report to the DM. When, on the other hand, the DM is not as severely constrained \(\bar{m}> \rho \xi_1(\xi_2-\xi_1)\), both workers now meet the DM in the horizontal structure, but the DM will pay more attention to worker 1 than to worker 2 to offset this communication friction. The DM will meet both workers for a total of $\bar{m}$ time, but that will not be equally split, which means that the DM is not equally weighting the underlying signals. This is sub-optimal since the processing ability has not changed and the total noise generated in the horizontal structure is the same as under the symmetric case. 

However, this sub-optimal weighting can be overcome in the vertical structure if worker 1 and 2 can meet for a sufficient amount of time to re-balance the weighting of the signals, then the weighting of signals can be optimized and the DM can extract more information. That is, if there is sufficient excess meeting capacity \(\kappa-\bar{m}\) beyond some critical threshold \(\bar{\kappa}\), the extra clarity obtained from the meeting of worker 1 and 2 outweighs the transmission noise, making the vertical structure strictly optimal. Demonstrating that when time is scarce for the head of an organization, they would do well to have her best communicator directly speak to them once they have met the other workers and aggregated all the relevant information, rather than meeting everyone individually to gather this information.

Figure \ref{fig:diff_ability} shows the difference in payoff between the two structures when worker 1 has a better communication ability than worker 2, along with equal processing ability (\(\rho\)) for illustrative purposes. It also shows that as $\kappa$ increases, the difference between the two structures decreases, but now there is a point $\bar{\kappa}$ where the vertical is preferred to the horizontal for \(\kappa > \bar{\kappa}\).

The preference for these two structures is related to discussions on organizational structure in the modern workplace (\cite{drucker1988ComingNewOrganization}, \cite{drucker1992NewSocietyOrganizations}, \cite{puranam2015ValveWay} \cite{lee2017SelfmanagingOrganizationsExploring}). Consider start-ups, there is a view that they are best to initially start of in a horizontal structure to increase creative idea generation through more diverse sources of information and less bureaucracy and then switch to more vertical as they scale (\cite{burton2017GitHubExploringSpace}). However, this might not always be optimal, as horizontal structures can hamper a firm's ability to actually execute their ideas, a necessary condition for them to then scale (\cite{lee2022MythFlatStart}).

Result \ref{prop_same_xi} applies to knife edge case, \(\xi_1^2 =\xi_2^2\) , but the result can be shown to apply more generally. That is, if the necessary assumptions are still met, but $\xi_1^2\geq\xi_2^2$ then the horizontal structure is preferred to the vertical structure. Figure \ref{fig:K_region} shows that if $\xi_1^2 <\xi_2^2$ as $\xi_1^2\to\xi_2^2$ the threshold $\bar{\kappa}$ increases. This is seen in the shaded blue region that corresponds to a threshold of $\bar{\kappa}=10$, whereas the smaller orange region corresponding to a threshold of $\bar{\kappa}=20$. The white region is the parameter space where there is no threshold and this is the region to the right of the \ang{45} line, e.g. $\xi_1^2\geq\xi_2^2$.

\begin{figure}[h]
    \centering
    \begin{minipage}[c]{0.4\textwidth}
        \centering
        \includegraphics[width=\textwidth]{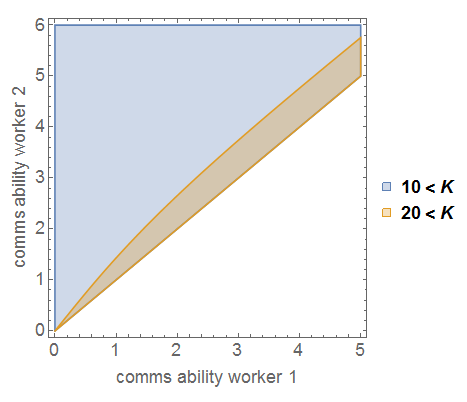}
    \end{minipage}
    \hfill     \begin{minipage}[c]{0.54\textwidth}
        \caption{\textit{Notes}: This figure depicts the organizational choice across worker communication friction parameters, with worker 1's friction $\xi_1^2$ on the horizontal axis and worker 2's friction $\xi_2^2$ on the vertical axis. Parameters are fixed at managerial attention capacity $\bar{m} = 10$ and symmetric processing precision $\rho_1 = \rho_2 = 1$. The 45-degree line ($\xi_1^2 = \xi_2^2$) separates the symmetric benchmark from asymmetric communication. In the unshaded region to the right of the 45-degree line ($\xi_1^2 \ge \xi_2^2$), the horizontal network strictly dominates for all finite meeting capacities $\kappa \in (\bar{m}, \infty)$. In the shaded regions to the left of the 45-degree line ($\xi_1^2 < \xi_2^2$), worker 1 possesses superior communication ability, establishing a finite threshold $\bar{\kappa}$ above which the vertical hierarchy becomes optimal. The dark shaded region represents the structural switch occurs under moderate meeting endowments ($\bar{\kappa} = 10$), whereas the lighter shaded region requires substantial organizational meeting capacity ($\bar{\kappa} = 20$) for hierarchical aggregation to surpass direct horizontal reporting.}

        \label{fig:K_region}
    \end{minipage}
\end{figure}

\noindent As the next result shows the assumption of equal processing abilities is not a necessary requirement for the switch in the preference of communication structures.

\begin{result}
\label{prop:asymmetric_rhos}
Suppose worker processing abilities $\rho_1, \rho_2 > 0$ are heterogeneous. If communication abilities are symmetric ($\xi_1^2 = \xi_2^2 = \xi^2$), then for every finite organizational meeting capacity $\kappa \in (\bar{m}, \infty)$, the horizontal structure strictly dominates the vertical structure.

If worker 1 is a superior communicator ($\xi_1 < \xi_2$) and managerial meeting time is severely constrained ($\bar{m} \le \rho_1 \xi_1(\xi_2 - \xi_1)$), direct communication with worker 2 collapses ($m_2^H = 0$), and the vertical structure strictly dominates the horizontal structure for all $\kappa > \bar{m}$. If managerial meeting time is less constrained ($\bar{m} > \rho_1 \xi_1(\xi_2 - \xi_1)$), both direct meetings occur in the horizontal structure, and there exists a unique finite threshold $\bar{\kappa} > \bar{m}$ such that the vertical structure strictly dominates the horizontal structure if and only if $\kappa > \bar{\kappa}$.
\end{result}

\noindent The intuition for the result is the following. With equal communication ability \(\xi_1^2=\xi_2^2=\xi^2\), the horizontal optimal allocation is given by
\begin{equation*}
    m_i^*=\frac{\rho_i}{\rho_1+\rho_2}\bar{m}
\end{equation*}

\noindent That is, optimal meeting time and consequentially precision weighting \(h_i^*\) are proportional to the processing ability and so the DM can optimally weight their meeting time accordingly. However, for this means that there will always be some noise generated in the vertical structure. In contrast, when the communication ability differs \(\xi_1^2<\xi_2^2\) and worker one meets directly with the DM, then a sufficiently small DM meeting time \(\bar{m}\) would lead the DM to spend all their meeting time with worker 1 in the horizontal, effectively ignoring worker 2. Therefore, the information of worker 2 is not used in the horizontal structure. However, in the vertical structure for the same scarce meeting time, worker 1 is only constrained in their meeting with the DM and not worker 2. Hence, worker 1 can incorporate some of worker 2's information into their meeting with the DM. Thus, for a sufficiently scare amount of managerial attention, the vertical structure strictly improves information aggregation.   

Finally, for the symmetric processing ability cases I can extend the model to allow for $n$ workers. Figure \ref{fig:G central} shows that for the generalization in the horizontal structure, the DM now meets with $n$ workers and figure \ref{fig:G decentral} shows that in the vertical the DM still only meets with worker 1 but that they now meet with $n-1$ workers.

\begin{figure}[h]
\centering
    \begin{subfigure}[b]{0.45\textwidth}
        \centering
        \begin{tikzpicture}[node distance = 1cm, auto]
            \node [node] at (0,0) (DMa) {DM};
            \node [node] at (-2,0) (onea) {1};
            \node [node] at (-1.5,-1) (twoa) {2};
            \node [node] at (-.5,-1.8) (threea) {3};
            \node at (.3,-1.5)(a) {$\ldots$};
            \node [node] at (1.2,-1.5) (na) {n};
            \draw[->] (twoa) -- (DMa);
            \draw[->] (onea) -- (DMa);
            \draw[->] (threea) -- (DMa);
            \draw[->] (na) -- (DMa);
        \end{tikzpicture}
        \caption{Horizontal structure.} \label{fig:G central}
    \end{subfigure} \qquad 
    \begin{subfigure}[b]{0.45\textwidth}
        \centering
        \begin{tikzpicture}[node distance = 1cm, auto]
            \node [node] at (-1.5,-3) (two) {2};
            \node [node] at (0,-2) (one) {1};
            \node [node] at (0,0) (DM) {DM};
            \node [node] at (-0.5,-3.5) (three) {3};
            \node  at (0.5,-3.2) (b) {$\ldots$};
            \node [node] at (1.5,-3) (n) {n};
            \draw[->] (two) -- (one);
            \draw[->] (three) -- (one);
            \draw[->] (n) -- (one);
            \draw[->] (one) -- (DM);
        \end{tikzpicture}
        \caption{Vertical structure.} \label{fig:G decentral}
    \end{subfigure}
    
    \caption{Organizational Structures with $n$ Workers} \label{fig:general_structures}
\end{figure}

\noindent Introducing the extra workers in this way allows for this model to be applied to larger organizations or even teams within the organization. The horizontal structure can be thought of as a team where the decision-making is concentrated, in the sense that the DM has to gather all the information  dispersed within her team. In contrast, in the vertical structure, the DM delegates the information gathering activity to a single team member. As with the 2 worker version in this setting the results will rely on the heterogeneity in the workers communication skills. 
\begin{result}
\label{prop_general}
  	Assume symmetric processing ability, $\rho_i=\rho >0$ for all workers \(i\in{1,\dots,n}\) and \(n\geq2\).
    \begin{enumerate}
        \item If $\xi_i=\xi >0$ for all workers \(i\), then for every finite organizational capacity \(\kappa \in[\bar{m},\infty)\) the horizontal structure strictly dominates the vertical structure, the payoff difference converges to zero as \(\kappa \rightarrow \infty\)
        \item If $0<\xi_1<\xi_j=\xi $ for all \(j\geq 2\), let \(d\equiv\xi-\xi_1>0\). Define  
        \begin{equation*}
        \bar{\kappa}_n=
        \begin{cases}
        \bar{m}, & \bar{m}\leq \rho \xi_1 d,\\
        \kappa_{I,n}, & \bar{m}>\rho \xi_1 d \text{ and } \kappa_{I,n}\leq\kappa_{B,n},\\
        \kappa_{C,n}, & \bar{m}>\rho \xi_1  d \text{ and } \kappa_{I,n}>\kappa_{B,n}.
        \end{cases}
    \end{equation*}
    For a feasible \(\kappa \geq \bar{m}\), the vertical structure is strictly preferred if and only if \(\kappa > \bar{\kappa}_n\), at \(\kappa = \bar{\kappa}_n\) the structures have equal payoffs.
    \end{enumerate}

\end{result}
\noindent  The intuition for this is the same as for the 2 worker case. When the communication ability is homogeneous, then the DM is optimally weighting the underlying signals in the horizontal structure. But in the vertical she is over-weighting the signal from worker 1. This situation does not hold once worker 1 has better communicating ability than his co-workers and the workers have a relatively large amount of excess meeting capacity. is high enough. Now in the horizontal structure, the DM over-weights the signal of worker 1, since they are better able to communicate. However, in the vertical structure as the excess meeting capacity increases $k-\bar{m}$, the organization will reach a point where worker 1 can produce reports with the other workers so that the DM optimizes the weighting of the underlying signals. 

The $n$-worker case clarifies a fundamental trade-off regarding the managerial span of control. As the team size $n$ expands, direct consultation in the horizontal structure forces the decision-maker to divide her scarce attention budget $\bar{m}$ into increasingly smaller intervals ($\bar{m}/n$). Consequently, the communication noise in each direct meeting escalates rapidly, illustrating how severe attention bottlenecks cause flat communication structures to deteriorate as organizations scale.

The vertical structure alleviates this congestion at the top by allowing the decision-maker to dedicate her entire attention budget $\bar{m}$ to the most articulate team member. However, hierarchical delegation does not eliminate the informational burden of organizational size; rather, it reallocates the aggregation bottleneck to the intermediary. In the vertical hierarchy, worker 1 must divide her available meeting endowment $(\kappa - \bar{m})$ across the remaining $n-1$ subordinates, allocating $(\kappa - \bar{m})/(n-1)$ to each interaction. As $n \to \infty$, the noise in the synthesized report likewise compounds unless the organization possesses sufficient internal slack ($\kappa$ is sufficiently large). The critical threshold $\bar{\kappa}_n$ required for the hierarchy to outperform the flat network must therefore increase in $n$. The vertical structure proves superior not by expanding the absolute volume of attention, but by shifting the burden of information aggregation to an agent endowed with greater meeting capacity who can subsequently transmit the consolidated report to the decision-maker with maximal clarity

\section{Conclusion}
\label{conclusion}
In this paper, I analyze how a DM's relative time scarcity and workers' heterogeneous communication abilities shape the optimal organizational structure. By modeling the firm as a communication network under a team-theoretic framework, I compare the efficiency of horizontal and vertical structures in aggregating decentralized information. The clarity of the resulting reports depends on both the duration of meetings and the innate communication skills of the workers, subject to the organization's overall time constraints. 

I show that when the environment is symmetric with respect to the workers' processing and communication abilities, the horizontal structure is preferred to the vertical structure. This result holds even as the time available for workers to meet increases to very large values, which corresponds to a rise in the decision-maker's relative time scarcity. Under these symmetric conditions, the decision-maker optimally weights the signals that underlie each report. Furthermore, in the simple two worker model I show that the assumption of equal processing ability is not necessary. Finally, I extend the model to \(n\)-workers.

The findings of this paper yield several implications for managerial practice and the allocation of executive attention. First, while formal organizational structure could be rigid due to bureaucratic issues, the effective communication protocols
within firms can be managed flexibly. The model demonstrates that when executive attention is severely constrained by external demands, such as capital acquisition, regulatory proceedings, or enterprise sales, managers achieve higher decision quality by channeling internal reporting through a designated, highly articulate intermediary. In contrast, when executive time constraints are less binding, direct consultation with operational specialists becomes optimal
because unmediated communication eliminates the compounded transmission noise inherent in multi-layered reporting. The framework therefore rationalizes why executive meeting schedules and internal information routing—rather than formal corporate hierarchies—must adapt to shifts in managerial workload.

Second, this framework highlights a structural solution for managing technical experts who possess high processing ability but lack strong communication skills. Rather than risking the loss of their specialized knowledge in rushed, direct meetings with a time-constrained leader, firms can pair these experts with skilled communicators who act as information aggregators. This vertical arrangement ensures that valuable but poorly articulated signals are accurately synthesized and transmitted, preventing less-articulate workers from being inadvertently silenced by the friction of direct communication.

Finally, these theoretical predictions offer clear avenues for future empirical research. Utilizing digital trace data, such as internal email networks or calendar metadata, researchers could observe whether firms' internal communication structures steepen in response to exogenous shocks to executive workload, such as mergers and acquisitions or earnings seasons. Testing these dynamics empirically would further illuminate how modern organizations balance the fundamental trade-off between communication friction and the scarcity of managerial attention.

\appendix 
\section{Proofs} \label{appA}

This section details the proofs for the propositions and results in this paper. \\ 

\noindent \textbf{Proposition \ref{EquilActCentral}}
\begin{proof}
Given the observed reports $r_i$ for subordinate $i \in {1, 2}$ are linear and have normally distributed error terms this means that the $i$th report can be written as follows,  
\begin{equation*}
	\begin{split}
		r_i &= \theta  + \epsilon_i + \eta_i\\
     		      &= \theta+ \mathcal{N} \left( 0 , \frac{1}{\rho_i}\right) + \mathcal{N} \left( 0 , \frac{\xi_i^2}{m_i}\right) \\
       		      &= \theta+ \mathcal{N} \left( 0 , \frac{m_i+\xi_i^2\rho_i}{\rho_i m_i}\right) .
	\end{split}
\end{equation*} 
Where the precision weight $h_i$ is the inverse of the variance,
\begin{equation}
\label{eq:precision_weight}
	h_i=\frac{\rho_i m_i}{\xi_i^2\rho_i + m_i}, \hspace{10mm} \forall i=\{1,2\}.
\end{equation}
let $h_0\equiv\frac{1}{\sigma_\theta^2}$ denote the precision of the prior \(\theta \sim\mathcal{N}(\mu_\theta,\sigma_\theta^2)\). With Bayesian updating the posterior mean is equal to the precision weighted sum of the prior and sample observations over the total sum of the precision weights,
\begin{equation*}
\begin{split}
	\mathbb{E} \big[\theta | \mathbf{r} \big] =& \frac{h_o\mu_\theta+ \sum_{i=1}^2 h_i r_i }{h_o + \sum_{i=1}^2 h_i} \\
    =& \mu_\theta  + \omega_1 (r_1-  \mu_\theta ) + \omega_2(r_2-  \mu_\theta ) 
    \end{split}
\end{equation*}
Note that relative precision is given by
\begin{equation*}
 \omega_i \equiv \frac{h_i}{h_0+ \sum_{i=j}^2 h_j},
\end{equation*}
Here $\omega_i$ is the relative precision assigned by the DM to a particular report. Taking the diffuse prior limit $\sigma_\theta^2 \to \infty$ implies $h_0 \to 0$. In this limit, the relative prior weight vanishes ($\omega_0 \to 0$), satisfying $\sum_{i=1}^2 \omega_i = 1$, and the posterior mean simplifies directly to

\begin{equation*}
a^*(\mathbf{r}) = \mathbb{E}[\theta \mid \mathbf{r}] = \sum_{i=1}^2 \omega_i r_i, \quad \text{where} \quad \omega_i = \frac{h_i}{\sum_{j=1}^2 h_j} = \frac{\frac{\rho_i m_i}{\xi_i^2 \rho_i + m_i}}{\sum_{j=1}^2 \frac{\rho_j m_j}{\xi_j^2 \rho_j + m_j}}.
\end{equation*}
Substituting this optimal linear decision rule into the decision-maker's quadratic loss function gives

\begin{equation*}
\begin{split}
    \mathbb{E}\left[(\theta - a^*(\mathbf{r}))^2\right] &= \mathbb{E}\left[\left(\theta - \sum_{i=1}^2 \omega_i (\theta + \epsilon_i + \eta_i)\right)^2\right] \\
    &= \mathbb{E}\left[\left(\sum_{i=1}^2 \omega_i (\epsilon_i + \eta_i)\right)^2\right].
\end{split}
\end{equation*}
By the mutual independence of the noise terms across workers, the cross-product expectations vanish, leaving

\begin{equation*}
    \begin{split}
        \mathbb{E}\left[(\theta - a^*(\mathbf{r}))^2\right] &= \sum_{i=1}^2 \omega_i^2 \operatorname{Var}(\epsilon_i + \eta_i) \\
        &= \sum_{i=1}^2 \omega_i^2 \left(\frac{1}{h_i}\right) \\
        &= \sum_{i=1}^2 \left(\frac{h_i}{\sum_{j=1}^2 h_j}\right)^2 \frac{1}{h_i} \\ 
        &= \frac{\sum_{i=1}^2 h_i}{\left(\sum_{j=1}^2 h_j\right)^2} \\
        & = \frac{1}{\sum_{i=1}^2 h_i}.
    \end{split}
\end{equation*}
Consequently, the decision-maker's maximized expected utility is given by
\begin{equation}
\label{eq:DM_simple}
    U_H = \bar{u} - \frac{1}{\sum_{i=1}^2 h_i}.
\end{equation}

\end{proof} 

\noindent \textbf {Proposition \ref{EquilActDecent}}

\begin{proof}
Let worker 1 choose an unbiased linear aggregation rule \(M_1(s_1, r_{2,1}) = q s_1 + (1 - q) r_{2,1}\). The DM observes the noisy transmission
\begin{equation*}
r_{1,dm} = q(\theta + \epsilon_1) + (1 - q)(\theta + \epsilon_2 + \eta_2) + \eta_1 = \theta + q \epsilon_1 + (1 - q)(\epsilon_2 + \eta_2) + \eta_1.
\end{equation*}
Under a diffuse prior (\(\sigma_\theta^2 \to \infty\)), the DM's optimal Bayesian action given the single report \(r_{1,dm}\) is \(a(r_{1,dm}) = \mathbb{E}[\theta \mid r_{1,dm}] = r_{1,dm}\). Substituting \(a(r_{1,dm})\) into the organizational loss function yields 
\begin{equation*}
\mathbb{E}\left[(\theta - a)^2 \mid q\right] = \mathbb{E}\left[\left(q \epsilon_1 + (1 - q)(\epsilon_2 + \eta_2) + \eta_1\right)^2\right]
\end{equation*}
By the independence of 
\(\epsilon_1, \epsilon_2, \eta_2,\) and \(\eta_1\), this simplifies to: 
\begin{equation*}
    \begin{split}
        \mathbb{E}\left[(\theta - a)^2 \mid q\right] &= q^2 \text{Var}(\epsilon_1) + (1 - q)^2 \left(\text{Var}(\epsilon_2) + \text{Var}(\eta_2)\right) + \text{Var}(\eta_1) \\
        &=\frac{q^2}{\rho_1} + (1 - q)^2 \left(\frac{1}{\rho_2} + \frac{\xi_2^2}{m_2}\right) + \frac{\xi_1^2}{m_1} 
    \end{split}
\end{equation*}
Let 
\begin{equation*}
    h_{1,2} \equiv \left(\frac{1}{\rho_2} + \frac{\xi_2^2}{m_2}\right)^{-1} = \frac{m_2 \rho_2}{m_2 + \xi_2^2 \rho_2}.
\end{equation*}
The team objective is strictly convex in \(q\). Differentiating with respect to \(q\) and setting the first-order conditions to zero gives
\begin{equation*}
    \frac{2q}{\rho_1} - \frac{2(1 - q)}{h_{1,2}} = 0 \implies q^* = \frac{\rho_1}{\rho_1 + h_{1,2}} \equiv q_{1,1},
\end{equation*}
and
\begin{equation*}
    1 - q^* = \frac{h_{1,2}}{\rho_1 + h_{1,2}} \equiv q_{1,2}.
\end{equation*}
Substituting \(q^*\) back into the loss expression yields the minimized variance
\begin{equation*}
\begin{split}
        \mathbb{E}\left[(\theta - a)^2\right] &= \left(\frac{\rho_1}{\rho_1 + h_{1,2}}\right)^2 \frac{1}{\rho_1} + \left(\frac{h_{1,2}}{\rho_1 + h_{1,2}}\right)^2 \frac{1}{h_{1,2}} + \frac{\xi_1^2}{m_1} \\
        &= \frac{1}{\rho_1 + h_{1,2}} + \frac{\xi_1^2}{m_1},
\end{split}
\end{equation*}
which gives the expected utility 
\begin{equation*}
    U_V = \bar{u} - \left(\frac{1}{\rho_1 + \frac{m_2 \rho_2}{m_2 + \xi_2^2 \rho_2}} + \frac{\xi_1^2}{m_1}\right)
\end{equation*}
Hence the Bayesian conditional expectation \(M_1^*(s_1, r_{2,1}) = \mathbb{E}[\theta \mid s_1, r_{2,1}]\) is the unique optimal reporting rule. \\

\end{proof}

\noindent \textbf{Result \ref{prop_same_xi}}
\begin{proof}
Firstly, I will determine the optimal meeting time for the horizontal and vertical structures.\\
 
\noindent \textbf{Horizontal Structure.} To determine the optimal allocation of meeting time the DM maximises their expected utility found in (\ref{eq:DM_simple}) subject to the meeting time constraint $\sum_{i=1}^2m_i\leq \bar{m}$ since the DM's constraint is the one that binds. As the utility increases in $m_i$, the meeting time is exhausted $\sum_{i=1}^2m_i=\bar{m}$. For now, I will keep the communication and processing ability heterogeneous. This is equivalent to the following maximization problem
\begin{equation*}
\begin{aligned}
& \max_{\textbf{m} \in \mathbb{R}}
&&  \sum_{i=1}^2 h_i  \\
& \text{subject to}
&& \sum_i^2 m_i =\bar{m} .\\
\end{aligned}
\end{equation*}
Plugging in the constraint into the objective function to get it in terms of $m_1$ gives
\begin{equation*}
	V(m_1)=\frac{m_1\rho_1}{m_1+\rho_1\xi_1^2} +\frac{(\bar{m}-m_1)\rho_2}{\bar{m}-m_1+\xi_2^2\rho_2}.
\end{equation*} 
The function is strictly concave by the second-order condition
\begin{equation*}
	\pdv[2]{V(m_1)}{m_1}=-2 \left[\frac{\xi_1^2\rho_1^2}{(\xi_1^2\rho_1+m_1)^3} +\frac{\xi_2^2\rho_2^2}{(\bar{m}-m_1+\xi_2^2\rho_2)^3} \right] <0 \hspace{3mm} \forall\ m_1,
\end{equation*}
this means that any critical point found in the following maximization problem will be at least a local maximum and there is at most one global maximum. The Lagrangian is the following
\begin{equation*}
\mathcal{L} (\textbf{m}) =  \sum_{i=1}^2 h_i - \lambda(\sum_i^2 m_i -\bar{m} ).
\end{equation*}
The first-order condition with respect to $m_i $ is as follows
\begin{equation}
\label{eq3}
\pdv{\mathcal{L}}{m_i}=0 \Leftrightarrow   \pdv{h_i}{m_i}  = \lambda.
\end{equation}   
Where 
\begin{equation*}
\begin{split}
\pdv{h_i}{m_i}  &=\frac{\xi_i^2\rho_i^2}{(\xi_i^2\rho_i+m_i)^2}\\
&=\frac{m_i^2\xi_i^2\rho_i^2}{m_i^2(\xi_i^2\rho_i+m_i)^2}\\
&=\frac{h_i^2\xi_i^2}{m_i^2}.\\
\end{split}
\end{equation*}

Combining this with (\ref{eq3})
\begin{equation*}
\begin{split}
\frac{h_i^2\xi_i^2}{m_i^2}  &= \lambda \\
m_i &= \frac{\xi_i h_i}{\sqrt{\lambda}}  \\
m_i &= \frac{\xi_i\rho_i m_i}{(\xi_i^2\rho_i+m_i)\sqrt{\lambda}}  \\
m_i\sqrt{\lambda} &= \xi_i\rho_i -\xi_i^2\rho_i\sqrt{\lambda}  \\
\end{split}
\end{equation*}
this gives
\begin{equation}
\label{optimal_meeting_cent_lambda}
    m_i= \frac{\xi_i\rho_i(1-\xi_i\sqrt{\lambda})}{\sqrt{\lambda}} .
\end{equation}

Using equation (\ref{optimal_meeting_cent_lambda}) and $\sum_{i=1}^2 m_i=\bar{m}$ to solve for $\lambda$ in terms of the primitives of the model gives
\begin{equation*}
\begin{split}
m_i&= \frac{\xi_i\rho_i(1-\xi_i\sqrt{\lambda})}{\sqrt{\lambda}}  \\
\sum_{i=1}^2 m_i&= \frac{\sum_{i=1}^2\xi_i\rho_i(1-\xi_i\sqrt{\lambda})}{\sqrt{\lambda}}  \\
\bar{m}&= \frac{\sum_{i=1}^2\xi_i\rho_i(1-\xi_i\sqrt{\lambda})}{\sqrt{\lambda}}  \\
\sqrt{\lambda}  &= \frac{\sum_{i=1}^2\xi_i\rho_i}{\bar{m}  + \sum_{i=1}^2\xi_i^2\rho_i } . \\
\end{split}
\end{equation*}
Now plugging $\sqrt{\lambda}$ back into (\ref{optimal_meeting_cent_lambda}) to yield
\begin{equation*}
\begin{split}
m_i^*&= \frac{\xi_i\rho_i}{\sqrt{\lambda}}(1-\xi_i\sqrt{\lambda})  \\
  &= \xi_i\rho_i \frac{\left(\bar{m}  + \sum_{i=1}^2\xi_i^2\rho_i \right)}{\sum_{i=1}^2\xi_i\rho_i } \left(1-\frac{\xi_i\sum_{i=1}^2\xi_i\rho_i}{\bar{m}  + \sum_{i=1}^2\xi_i^2\rho_i} \right)  \\
  &= \xi_i\rho_i \frac{\left(\bar{m}  + \sum_{i=1}^2\xi_i^2\rho_i \right)}{\sum_{i=1}^2\xi_i\rho_i } \left(\frac{\bar{m}  + \sum_{i=1}^2\xi_i^2\rho_i - \xi_i\sum_{i=1}^2\xi_i\rho_i}{\bar{m}  + \sum_{i=1}^2\xi_i^2\rho_i } \right)  \\
    &= \xi_i\rho_i \frac{\left(\bar{m}+ \sum_{i=1}^2\xi_i^2\rho_i - \xi_i\sum_{i=1}^2\xi_i\rho_i \right)}{\sum_{i=1}^2\xi_i\rho_i } . \\
\end{split}
\end{equation*}
The optimal allocation of meeting time is thus
\begin{equation}
\label{optimal_meeting_cent}
m_i^*= \xi_i\rho_i \frac{\left(\bar{m}+ \sum_{i=1}^2\xi_i^2\rho_i - \xi_i\sum_{i=1}^2\xi_i\rho_i \right)}{\sum_{i=1}^2\xi_i\rho_i }  .
\end{equation}
Given that $h_i=\frac{\rho_i m_i}{\xi_i^2\rho_i +m_i}$ this can be expressed in terms of the primitives of the model
\begin{equation*}
\begin{split}
h_i^*&=\frac{\xi_i\rho_i^2}{\xi_i^2\rho_i + \xi_i\rho_i \frac{\left(\bar{m}+ \sum_{i=1}^2\xi_i^2\rho_i - \xi_i\sum_{i=1}^2\xi_i\rho_i \right)}{\sum_{i=1}^2\xi_i\rho_i }}  \frac{\left(\bar{m}+ \sum_{i=1}^2\xi_i^2\rho_i - \xi_i\sum_{i=1}^2\xi_i\rho_i \right)}{\sum_{i=1}^2\xi_i\rho_i }\\
    &= \frac{\xi_i\rho_i^2}{\xi_i^2\rho_i\sum_{i=1}^2\xi_i\rho_i  + \xi_i\rho_i \left(\bar{m}+ \sum_{i=1}^2\xi_i^2\rho_i - \xi_i\sum_{i=1}^2\xi_i\rho_i \right)} \left(\bar{m}+ \sum_{i=1}^2\xi_i^2\rho_i - \xi_i\sum_{i=1}^2\xi_i\rho_i \right)\\
	&= \frac{\xi_i\rho_i^2}{\xi_i^2\rho_i\sum_{i=1}^2\xi_i\rho_i  +  \xi_i\rho_i\bar{m}+ \xi_i\rho_i\sum_{i=1}^2\xi_i^2\rho_i -\xi_i^2\rho_i \sum_{i=1}^2\xi_i\rho_i } \left(\bar{m}+ \sum_{i=1}^2\xi_i^2\rho_i - \xi_i\sum_{i=1}^2\xi_i\rho_i \right)\\
	&= \frac{\rho_i\left(\bar{m}+ \sum_{i=1}^2\xi_i^2\rho_i - \xi_i\sum_{i=1}^2\xi_i\rho_i \right)}{  \bar{m}+ \sum_{i=1}^2\xi_i^2\rho_i } .
\end{split}
\end{equation*}
Given that processing and communicating ability are assumed to be symmetric, $\rho_1=\rho_2=\rho$ and $\xi_1^2=\xi_2^2=\xi^2$ plugging this into (\ref{optimal_meeting_cent}) yields
\begin{equation*}
\begin{split}
m_i^*&= \xi \rho \frac{\left(\bar{m}+ \sum_{i=1}^2\xi^2\rho - \xi\sum_{i=1}^2\xi\rho \right)}{\sum_{i=1}^2\xi\rho }  \\
	&= \xi \rho \frac{\left(\bar{m}+ \xi^2\rho \sum_{i=1}^2- \xi^2\rho \sum_{i=1}^2 \right)}{\xi\rho \sum_{i=1}^2 }  \\
	&=  \frac{\bar{m} }{2  } . \\
\end{split}
\end{equation*}
The DM optimally spends an equal amount of time meeting both workers. This leads to equal precision weights $h_1^*=h_2^*$ and so the DM equally weights both reports $\omega_1^*=\omega_2^*=1/2$ .

\noindent Let the optimized horizontal payoff be denoted by \(U_H\), from using \(m^*=\bar{m}/2\) and Proposition \ref{EquilActCentral} this gives

\begin{equation}
\label{eq:Horizontal_payoff_symmetric}
U_H= \bar{u} - \frac{1}{2\rho} - \frac{\xi^2}{\bar{m}}
\end{equation}

\noindent\textbf{Vertical structure.}
For any feasible vertical allocation,
\begin{equation*}
m_1\leq\bar m,\qquad m_1+m_2\leq\kappa,
\end{equation*}
the payoff is
\begin{equation*}
U_V(m_1,m_2) =\bar u-\frac{1}{\rho+h_2(m_2)}-\frac{\xi^2}{m_1}, \qquad h_2(m_2)=\frac{\rho m_2}{m_2+\rho\xi^2}.
\end{equation*}
For finite $m_2$, $h_2(m_2)<\rho$, and $m_1\leq\bar m$. Hence
\begin{equation*}
U_V(m_1,m_2) < \bar u-\frac{1}{2\rho}-\frac{\xi^2}{\bar m} =U_H.
\end{equation*}
Therefore the horizontal structure strictly dominates for every
finite $\kappa\geq\bar m$. \\

\noindent To show convergence, consider the feasible vertical allocation
$m_1=\bar m,\ m_2=\kappa-\bar m$. Its payoff converges to
$U_H$ as $\kappa\to\infty$. Since the optimized vertical payoff is bounded above by $U_H$ and bounded below by the payoff from this feasible allocation, the optimized vertical payoff also converges to $U_H$.
\end{proof}

\noindent \textbf{Result \ref{prop_diff_xi}}
\begin{proof}
Assume $\rho_1=\rho_2=\rho>0$ and $0<\xi_1<\xi_2$. Let
$d=\xi_2-\xi_1>0$. \\

\noindent\textbf{Horizontal structure.}
The horizontal structure maximizes
\begin{equation}
H_T=h_1(m_1)+h_2(m_2) \quad\text{subject to}\quad m_1+m_2\leq\bar m,
\end{equation}
where
\begin{equation}
h_i(m_i)=\frac{\rho m_i}{m_i+\rho\xi_i^2}.
\end{equation}
The time constraint binds. At an interior optimum, equality of marginal precisions gives
\begin{equation}
\frac{\xi_1}{m_1+\rho\xi_1^2} = \frac{\xi_2}{m_2+\rho\xi_2^2}.
\end{equation}
Solving this condition together with $m_1+m_2=\bar m$ gives
\begin{equation}
m_1^H= \frac{\xi_1\bigl[\bar m+\rho\xi_2d\bigr]}{\xi_1+\xi_2}, \qquad m_2^H= \frac{\xi_2\bigl[\bar m-\rho\xi_1d\bigr]}{\xi_1+\xi_2}.
\end{equation}
Thus the interior allocation applies when $\bar m>\rho\xi_1d$. If $\bar m\leq\rho\xi_1d$, the constrained optimum is the corner
$m_1^H=\bar m,\ m_2^H=0$.\\

\noindent\emph{Case I: Horizontal corner region.}
Suppose $\bar m\leq\rho\xi_1d$. The horizontal payoff is
\begin{equation}
U_H^C=\bar u-\frac{1}{\rho}-\frac{\xi_1^2}{\bar m}.
\end{equation}

In the vertical structure, the time constraints are
\begin{equation}
m_1\leq\bar m,\qquad m_1+m_2\leq\kappa.
\end{equation}
For $\kappa\geq\bar m$, the total time constraint binds, so $m_2=\kappa-m_1$. The vertical payoff as a function of $m_1$ is
\begin{equation}
U_V(m_1)= \bar u- \frac{\kappa-m_1+\rho\xi_2^2} {\rho\left[2(\kappa-m_1)+\rho\xi_2^2\right]} -\frac{\xi_1^2}{m_1}.
\end{equation}
Its unconstrained optimum is
\begin{equation}
\widehat m_1(\kappa) =\frac{\xi_1(2\kappa+\rho\xi_2^2)}{2\xi_1+\xi_2},
\end{equation}
and the constrained optimum is
\begin{equation}
m_1^*(\kappa)=\min\{\bar m,\widehat m_1(\kappa)\}, \qquad m_2^*(\kappa)=\kappa-m_1^*(\kappa).
\end{equation}
The cap binds for all $\kappa\geq\bar m$ in this case: indeed, $\bar m\leq\rho\xi_1d<\rho\xi_1\xi_2$ implies $\widehat m_1(\bar m)>\bar m$, and $\widehat m_1(\kappa)$ increases in $\kappa$. Hence $m_1^*=\bar m$ and $m_2^*=\kappa-\bar m$.
At $\kappa=\bar m$, the two structures have equal payoffs. For every $\kappa>\bar m$, $m_2^*>0$, so the vertical structure incorporates worker 2's information and strictly improves on the horizontal corner. Thus the threshold in this case is $\bar\kappa=\bar m$. \\

\noindent\emph{Case II: Horizontal interior region.}
Suppose $\bar m>\rho\xi_1d$. Define
\begin{equation}
S=\xi_1+\xi_2,\qquad D_H=2\bar m+\rho d^2,\qquad B=2\xi_1+\xi_2,\qquad \psi=\rho\xi_2^2.
\end{equation}
Using the interior horizontal allocation,
\begin{equation}
U_H^I =\bar u-\frac{\bar m+\rho(\xi_1^2+\xi_2^2)}
{\rho\left(2\bar m+\rho d^2\right)} =\bar u-\frac{1}{2\rho}-\frac{S^2}{2D_H}.
\end{equation}
For the vertical structure, since $m_2=\kappa-m_1$, the loss is
\begin{equation}
L_V(m_1)= \frac{\kappa-m_1+\rho\xi_2^2}{\rho\left[2(\kappa-m_1)+\rho\xi_2^2\right]} +\frac{\xi_1^2}{m_1}.
\end{equation}
It is strictly convex in $m_1$. Its first-order condition gives
\begin{equation}
\widehat m_1(\kappa) =\frac{\xi_1(2\kappa+\psi)}{B}, \qquad m_1^*(\kappa)=\min\{\bar m,\widehat m_1(\kappa)\}.
\end{equation}
The cap starts binding at
\begin{equation}
\kappa_B =\frac{\bar m B}{2\xi_1}-\frac{\psi}{2}.
\end{equation}
When $\kappa\leq\kappa_B$, the vertical optimum is interior in $m_1$, and substitution of the first-order condition gives
\begin{equation}
U_V^I(\kappa) =\bar u-\frac{1}{2\rho} -\frac{B^2}{2(2\kappa+\psi)}.
\end{equation}
The crossing of this branch with $U_H^I$ occurs at
\begin{equation}
\kappa_I=\frac{D_H(B/S)^2-\psi}{2}.
\end{equation}

When $\kappa\geq\kappa_B$, the cap binds, $m_1^*=\bar m$. Writing $t=\kappa-\bar m$, the vertical payoff is
\begin{equation}
U_V^C(t)= \bar u-\frac{t+\psi}{\rho(2t+\psi)} -\frac{\xi_1^2}{\bar m}.
\end{equation}
Define
\begin{equation}
F=\bar m(3\xi_1+\xi_2)-2\rho\xi_1^2d.
\end{equation}
Since $\bar m>\rho\xi_1d$,
\begin{equation}
F>\rho\xi_1d(\xi_1+\xi_2)>0.
\end{equation}
The crossing of this capped branch with $U_H^I$ occurs at
\begin{equation}
\kappa_C=\bar m+\frac{\xi_2^2(\bar m-\rho\xi_1d)^2}{Fd}.
\end{equation}
At $\kappa=\bar m$, the vertical payoff is strictly below $U_H^I$.
If the cap binds there, the difference is
\begin{equation}
U_H^I-U_V^C =\frac{(\bar m-\rho\xi_1d)^2}
{\rho\bar mD_H}>0.
\end{equation}
If the vertical optimum is interior there, then $\bar m>\rho\xi_1\xi_2$. Setting $q=\bar m-\rho\xi_1\xi_2>0$, direct substitution gives
\begin{equation}
U_H^I-U_V^I(\bar m) =
\frac{\xi_1\left[q(3\xi_1+2\xi_2)+\rho\xi_1^2(2\xi_1+\xi_2)
\right]}{(2\bar m+\rho\xi_2^2)D_H}>0.
\end{equation}

As $\kappa\to\infty$, the cap binds and
\begin{equation}
\lim_{\kappa\to\infty}U_V^C =\bar u-\frac{1}{2\rho}-\frac{\xi_1^2}{\bar m}.
\end{equation}
The limiting payoff exceeds $U_H^I$, since
\begin{equation}
\lim_{\kappa\to\infty}U_V^C-U_H^I=\frac{dF}{2\bar mD_H}>0.
\end{equation}
The optimized vertical payoff is continuous and strictly increasing in $\kappa$: increasing $\kappa$ expands the feasible set and permits a strict increase in $m_2$ at any fixed feasible $m_1$. Therefore there is exactly one crossing. It lies on the interior branch if $\kappa_I\leq\kappa_B$, and on the capped branch otherwise. Hence
\begin{equation}
\bar\kappa=
\begin{cases}
\kappa_I,& \kappa_I\leq\kappa_B,\\
\kappa_C,& \kappa_I>\kappa_B.
\end{cases}
\end{equation}
The vertical structure strictly dominates the horizontal structure if and only if $\kappa>\bar\kappa$.
\end{proof}

\noindent \textbf{Result \ref{prop:asymmetric_rhos}}
\begin{proof}
The first section will correspond to homogeneous communication and the second section will correspond to heterogeneous communication.
\subsubsection*{Case I: $\xi_i^2=\xi^2 $}
Let

\begin{equation*}
t=\kappa-\bar{m},
\end{equation*}

\noindent Let $U_H$ denote the optimized Horizontal payoff and $U_{V_1}$ the optimized Vertical payoff when worker $1$ is the intermediary. After terms common to both structures are canceled out, the payoff difference when worker 1 is the intermediary is

\begin{equation*}
\begin{aligned}
U_H-U_{V_1}
&=
\frac{t+\xi^2\rho_2}
{t(\rho_1+\rho_2)+\xi^2\rho_1\rho_2}
-
\frac{1}{\rho_1+\rho_2}
\\
&=
\frac{\xi^2\rho_2^2}
{(\rho_1+\rho_2)
\left[
t(\rho_1+\rho_2)+\xi^2\rho_1\rho_2
\right]}
>0.
\end{aligned}
\end{equation*}

\noindent Hence, for $\rho_1,\rho_2,\xi>0$ and $0<\bar{m}\leq\kappa<\infty$, the Horizontal structure strictly dominates either Vertical structure. Moreover,

\begin{equation*}
U_H-U_{V_j}\longrightarrow 0
\qquad\text{as}\qquad
\kappa\longrightarrow\infty,
\end{equation*}

\subsubsection*{Case I: $\xi_1^2<\xi_2^2 $}

\emph{Corner meeting allocation} Suppose worker 1 is the better communicator, so that $0<\xi_1<\xi_2$. Under the Horizontal structure, the DM maximizes $U_H(m_1)$ defined above. Since $U_H$ is strictly concave, the corner allocation
\begin{equation*}
m_1^*=\bar{m}, \qquad m_2^*=0
\end{equation*}
is optimal if and only if
\begin{equation*}
U_H'(\bar{m})\geq 0.
\end{equation*}
Evaluating the derivative at $m_1=\bar{m}$, this condition is equivalent to
\begin{equation*}
\bar{m} \leq \rho_1\xi_1(\xi_2-\xi_1).
\end{equation*}
Thus, when managerial meeting time is sufficiently scarce, the DM allocates all of her time to worker 1 and does not directly use worker 2's information. The optimized Horizontal payoff is then
\begin{equation*}
U_H = \bar{u} - \frac{1}{\rho_1} - \frac{\xi_1^2}{\bar{m}}.
\end{equation*}
In the Vertical structure, worker 1 can use $\kappa-\bar{m}$ units of time to collect information from worker 2 before communicating with the DM. The Vertical payoff is
\begin{equation*}
U_V = \bar{u} -
\frac{1}{ \rho_1+ \frac{(\kappa-\bar{m})\rho_2} {\kappa-\bar{m}+\xi_2^2\rho_2} } - \frac{\xi_1^2}{\bar{m}}.
\end{equation*}
Let
\begin{equation*}
a = \frac{(\kappa-\bar{m})\rho_2} {\kappa-\bar{m}+\xi_2^2\rho_2}.
\end{equation*}
For $\kappa>\bar{m}$ and $\rho_2>0$, $a>0$. Therefore,
\begin{equation*}
\begin{split}
U_V-U_H
&= \frac{1}{\rho_1} - \frac{1}{\rho_1+a}
\\
&= \frac{a}
{\rho_1(\rho_1+a)} >0.
\end{split}
\end{equation*}
\emph{Interior meeting allocation} When $\bar{m} > \rho_1 \xi_1 (\xi_2 - \xi_1)$, the horizontal allocation is interior, generating the aggregated precision 
\begin{equation*}
    H_T(\bar{m}) = \frac{(\rho_1 + \rho_2)\bar{m} + \rho_1 \rho_2 (\xi_2 - \xi_1)^2}{\bar{m} + \rho_1 \xi_1^2 + \rho_2 \xi_2^2}
\end{equation*}
and payoff 
\begin{equation*}
    U_H^I = \bar{u} - \frac{1}{H_T(\bar{m})}.
\end{equation*}
At the lower boundary $\kappa = \bar{m}$, intermediate transmission is impossible ($t = 0$), so 
\begin{equation*}
    U_V(\bar{m}) = \bar{u} - \frac{1}{\rho_1} - \frac{\xi_1^2}{\bar{m}} < U_H^I
\end{equation*}
In the asymptotic limit as $\kappa \to \infty$, intermediate communication noise vanishes, yielding 
\begin{equation*}
   \lim_{\kappa \to \infty} U_V(\kappa) = \bar{u} - \frac{1}{\rho_1 + \rho_2} - \frac{\xi_1^2}{\bar{m}} > U_H^I.
\end{equation*}
Because $U_V(\kappa)$ is strictly increasing and continuous in $\kappa$, the intermediate value theorem guarantees the existence of a unique crossing point $\bar{\kappa} > \bar{m}$ satisfying $U_V(\bar{\kappa}) = U_H^I$, above which the vertical structure is strictly preferred. If the vertical cap binds at the crossing, equating the capped vertical payoff to the horizontal interior payoff gives
\begin{equation*}
\bar\kappa = \bar m+ \frac{\rho_2\xi_2^2(1-\rho_1\Omega)}{(\rho_1+\rho_2)\Omega-1}.
\end{equation*}

\begin{equation*}
\Omega \equiv \frac{\bar m+\rho_1\xi_1^2+\rho_2\xi_2^2} {(\rho_1+\rho_2)\bar m+\rho_1\rho_2(\xi_2-\xi_1)^2} -\frac{\xi_1^2}{\bar m}.
\end{equation*}

\end{proof}

\noindent \textbf{Result \ref{prop_general}}
\begin{proof}
The first section will correspond to homogeneous communication and the second section will correspond to heterogeneous communication.\\

\noindent For a worker with processing ability \(\rho\) and communication ability \(\xi\), define the precision weight of a meeting of length \(m\) as
\begin{equation*}
    h(m;\xi)=\frac{\rho m}{m+\rho\xi^2},
\end{equation*}
\noindent with first and second derivatives with respect to \(m\) given by 
\begin{equation}
\label{eq:precision_derivatove}
    h^\prime(m;\xi)=\frac{\rho^2 \xi^2}{(m+\rho\xi^2)^2}, \qquad  h^{\prime\prime}(m;\xi)=-2\frac{\rho^2 \xi^2}{(m+\rho\xi^2)^3}<0.
\end{equation}
Let
\begin{equation}
\label{def:worker1_worker_j_comm_diff}
    d=\xi-\xi_1
\end{equation}

\subsubsection*{Case I: $\xi_i^2=\xi^2 $}

\noindent Performing the same steps as I did for the proof of proposition \ref{EquilActCentral} but with $n$ workers will lead to the following expected utility. Where $h_i$ corresponds to a precision weight
\begin{equation}
\label{eq2}
U_H=\bar{u}-\mathbb{E} \left[ (\theta -a^*)^2 \right]=\bar{u}-  \frac{1}{\sum_{i=1}^n h_i}. 
\end{equation}
The higher the precision of each report, the smaller the quadratic loss and the better the performance of the firm in estimating $\theta$. Given that processing and communicating ability are assumed to be symmetric, it will be optimal for the DM to meet each worker for the same amount of time
\begin{equation*}
m_i^*=\frac{\bar{m}}{n}\ \forall\ i .
\end{equation*}
This leads to equal precision weights $h_i^*=h$ and so the total precision is given by
\begin{equation*}
    \begin{split}
        \sum_{i=1}^n h &= n h \\
        =& n \frac{\rho \frac{\bar{m}}{n}}{ \frac{\bar{m}}{n}+ \rho \xi^2} \\
        =&  \frac{n\rho\bar{m}}{\bar{m}+n\rho \xi^2},
    \end{split}
\end{equation*}
\noindent giving
\begin{equation*}
    U_H= \bar{u} - \frac{1}{n\rho} - \frac{\xi^2}{\bar{m}}.
\end{equation*}

\noindent For any vertical allocation the report between worker 1 and the DM will be
\begin{equation*}
	\begin{split}
 			r_{1,dm} &=\mathbb{E}\left[\theta| \{s_1, r_{2,1}, r_{3,1},\ldots, r_{n,1} \}\right] + \eta_1 \\
                			&=  \theta  + q_{1,1}\epsilon_1+\sum_{i=2}^n q_{1,i}\left( \epsilon_i+\eta_i \right)+ \eta_1 .
	\end{split}
\end{equation*} 
\noindent This means that, given the independence of the error terms, the objective function can be re-arranged as follows
\begin{equation}
\label{eq:vertical_n_workers_util}
	\begin{split}
			\mathbb{E} \left[ (\theta -a^*)^2 \right] &=\mathbb{E} \left[ (\theta -r_{1,dm} )^2 \right] \\
						            &=\left[ \frac{\rho}{\rho+\sum\limits_{i=2}^n \frac{m_i \rho}{m_i +\xi^2\rho} }\right]^2 \frac{1}{\rho}+\left[ \frac{\frac{m_2 \rho}{m_2 +\xi^2\rho} }{\rho+\sum\limits_{i=2}^n \frac{m_i \rho}{m_i +\xi^2\rho} }\right]^2\left( \frac{m_2 +\xi^2\rho}{\rho m_2}\right)+ \\
							& \ldots+\left[ \frac{\frac{m_n \rho}{m_n +\xi^2\rho} }{\rho+\sum\limits_{i=2}^n \frac{m_i \rho}{m_i +\xi^2\rho}}\right]^2\left( \frac{m_n +\xi^2\rho}{\rho m_n}\right)+ \frac{\xi^2}{m_1} \\
						     &= \frac{1}{\rho+\sum\limits_{i=2}^n \frac{m_i \rho}{m_i +\xi^2\rho} }+\frac{\xi^2}{m_1} \\
                              &= \frac{1}{\rho+\sum\limits_{i=2}^n h(m_i;\xi) }+\frac{\xi^2}{m_1} .\\
	\end{split}
\end{equation}

\noindent Define worker 1's precision before they meet the DM
\begin{equation*}
    G_V = \rho+\sum\limits_{i=2}^n h(m_i;\xi).
\end{equation*}
\noindent Since \(h(m_i;\xi)<\rho\) for every finite \(m_i\),
\begin{equation*}
    G_V < n \rho.
\end{equation*}
Moreover, worker 1's meeting with the DM must satisfy \(m_1 \leq\bar{m}\), which gives
\begin{equation*}
    \begin{split}
        U_V &= \bar{u}- \frac{1}{G_V} - \frac{\xi^2}{m_1} \\
            &\leq \bar{u}- \frac{1}{n\rho} - \frac{\xi^2}{\bar{m}} = U_H
    \end{split}
\end{equation*}
Given \(G_V < n \rho\) if  \(m_1<\bar{m}\Rightarrow -\frac{1}{\bar{m}}>-\frac{1}{m_1}\) then \(U_H-U_V>0\); if \(m_1=\bar{m}\), then given \(m_2 \leq \kappa-\bar{m} <\infty\ \text{and}\ G_V < n \rho\), gives \(U_H-U_V>0\). Hence for finite \(\kappa\) the horizontal structure dominates. When \(m_1=\bar{m}\) and distributing the remaining meeting time \(\kappa-\bar{m}\) across the other meetings it is clear as \(\kappa\rightarrow \infty\) that \(G_V \rightarrow n\rho\) and the payoffs of the two structures converge.

\subsubsection*{Case II: $\xi_1^2 < \xi^2 \ \forall i \neq 1$ }
\emph{Corner solution.} Consider the horizontal structure,

\begin{equation*}
\begin{aligned}
& \max_{m_i \geq 0}
&&   h(m_1;\xi_1) + \sum_{j=2}^nh(m_j;\xi)  \\
& \text{subject to}
&& \sum_i^n m_i =\bar{m} .\\
\end{aligned}
\end{equation*}
The objective is strictly concave as the second order can be shown to be negative using (\ref{eq:precision_derivatove}). Suppose \(\bar{m}\) is sufficiently small, the corner solution \(m_1=\bar{m}, m_j\geq 2\) is optimal if and only if 
\begin{equation*}
    h^\prime(\bar{m};\xi_1)\geq  h^\prime(0;\xi),
\end{equation*}
that is the marginal return from giving an extra unit to worker 1 is at least as large as starting a meeting with any other worker. substituting \(h^\prime\) and using (\ref{def:worker1_worker_j_comm_diff}) gives
\begin{equation*}
    \begin{split}
         \frac{\rho^2\xi_1^2}{(\bar{m}+\rho \xi_1^2)^2} \geq\frac{1}{\xi^2} &\Leftrightarrow \frac{\rho\xi_1}{(\bar{m}+\rho \xi_1^2)} \geq\frac{1}{\xi} \\
         &\Leftrightarrow \rho\xi_1 \xi \geq (\bar{m}+\rho \xi_1^2) \\
          &\Leftrightarrow \bar{m} \leq \rho\xi_1(\xi - \xi_1) = \rho\xi_1 d
    \end{split}
\end{equation*}
At this corner the horizontal payoff is
\begin{equation}
    \label{eq:horizontal_payoff_corner_n_workers}
    U_H^C = \bar{u} - \frac{1}{\rho} - \frac{\xi_1^2}{\bar{m}}
\end{equation}
Suppose \(\kappa>\bar{m}\) and let \(t=\kappa-\bar{m}>0\). Consider a feasible allocation where worker 1 spends all DM's meeting capacity with the DM and divides \(t\) equally among the other workers \(2,\dots,n\) 
\begin{equation*}
    m_1=\bar{m}, \qquad m_j = \frac{t}{n-1}\ j\geq 2.
\end{equation*}
As can be seen from (\ref{eq:vertical_n_workers_util}) each of these workers contributes precision
\begin{equation*}
    h\big(\frac{t}{n-1};\xi \big)= \frac{t\rho}{t + \rho\xi^2(n-1)}.
\end{equation*}
So the total contribution of these \(n-1\) workers is given by
\begin{equation*}
    H = \frac{t\rho(n-1)}{t + \rho\xi^2(n-1)}>0.
\end{equation*}
The feasible vertical payoff is given by
\begin{equation*}
    \tilde{U_V} = \bar{u} - \frac{1}{\rho + H} - \frac{\xi_1^2}{\bar{m}},
\end{equation*}
and from (\ref{eq:horizontal_payoff_corner_n_workers}) this gives
\begin{equation*}
    \begin{split}
       \tilde{U_V} - U_H^C  &= \frac{1}{\rho} -\frac{1}{\rho + H} \\
        &= \frac{H}{\rho(\rho+H)}>0
    \end{split}  
\end{equation*}
Thus the vertical structure's payoff dominates for all \(\kappa > \bar{m}\) for the corner solution.\\

\noindent \emph{Interior meeting time solutions.}  Now suppose the conditions for the corner solution do not hold, that is all horizontal meetings occur. Equality of the marginal precisions \(h^\prime\) gives
\begin{equation}
    \frac{\xi_1}{m_1+\rho\xi_1^2}=\frac{\xi}{m_j+\rho\xi^2}\quad j\geq 2.
\end{equation}
Hence there must exit a common constant \(C>0\) such that
\begin{equation}
    m_i = C\xi_i - \rho \xi_i^2.
\end{equation}
Summing over all the meetings gives
\begin{equation*}
    \begin{split}
        \sum_{i=1}^n m_i &= \sum_{i=1}^n (C\xi_i -\rho\xi_i^2) \\
        \bar{m} &= C(\xi_1+(n-1)\xi) - \rho(\xi_1^2 + (n-1)\xi^2).
    \end{split}
\end{equation*}
Define
\begin{equation*}
    S_n= \xi_1+(n-1)\xi, \qquad N_{H,n}= \bar{m}+ \rho(\xi_1^2 + (n-1)\xi^2),
\end{equation*}
so C becomes
\begin{equation*}
C=\frac{N_{H,n}}{S_n}.
\end{equation*}
This means the meetings of the workers can be written as
\begin{equation*}
    \begin{split}
        m_1^H &= \frac{\xi_1[\bar{m}+\rho(n-1)\xi d]}{S_n}\\
        m_j^H &= \frac{\xi[\bar{m}-\rho\xi_1 d]}{S_n}, \ j\geq 2.
    \end{split}
\end{equation*}
The second expression is positive under the condition \(\bar{m}>\rho\xi_1 d\). At the optimum the precision weights are given by 
\begin{equation*}
    h(m_i;\xi_i)= \frac{\rho(C\xi_i-\rho\xi_i^2)}{C\xi_i}= \rho -\frac{\rho^2\xi_i}{C}.
\end{equation*}
Summing over all horizontal the precisions gives and using \(C\)
\begin{equation*}
\begin{split}
   H_T= \sum_{i=1}^n\rho -\frac{\rho^2\xi_i}{C} &= n\rho - \frac{\rho^2(\xi_1 + (n-1)\xi)}{C}  \\
     &= n\rho- \frac{\rho^2S_n}{C}\\
     & = \frac{\rho(nN_{H,n}- \rho S_n^2) }{N_{H,n}}
\end{split}
\end{equation*}
Then \(S_n^2=\xi_1^2+2(n-1)\xi\xi_1+(n-1)^2\xi^2\), it can be shown that
\begin{equation*}
\begin{split}
    n[\xi_1^2+(n-1)\xi^2] - S_n^2 &=  n\xi_1^2+ n(n-1)\xi^2 - [\xi_1^2+2(n-1)\xi\xi_1+(n-1)^2\xi^2] \\
    & = (n-1)(\xi_1^2-2\xi\xi_1+\xi^2) \\
    &= (n-1)d^2.
\end{split}
\end{equation*}
Let \(D_{H,n}=n\bar{m} + \rho(n-1)d^2\), so that 
\begin{equation*}
\begin{split}
    H_T & = \frac{\rho(nN_{H,n} - \rho S_n^2) }{N_{H,n}} \\
    & = \frac{\rho(n[\bar{m}+ \rho(\xi_1^2 + (n-1)\xi^2)] - \rho S_n^2) }{N_{H,n}} \\
    & = \frac{\rho(n\bar{m} +\rho[n(\xi_1^2 + (n-1)\xi^2) - S_n^2] ) }{N_{H,n}} \\
    & = \frac{\rho(n\bar{m} +\rho[(n-1)d^2] ) }{N_{H,n}} \\
    & = \frac{\rho D_{H,n} }{N_{H,n}}.
\end{split}
\end{equation*}
Using \(nN_{H,n} - \rho S_n^2=D_{H,n}\)
\begin{equation}
\label{eq:horizontal_payoff_n_workers_optimal}
    U_H= \bar{u}- \frac{N_{H,n}}{\rho D_{H,n}} = \bar{u}- \frac{1}{n\rho} - \frac{ S_n^2}{nD_{H,n}}
\end{equation}
For the vertical structure, let \(m_{-1}=\kappa - m_1\) be the meeting time available to worker 1 to meet all workers \(2,\dots, n\). Using \(h(m_1:\xi)\), since workers are symmetric and worker 1 essentially plays the role of the DM in the horizontal structure, and \(h(m_1:\xi)\) is concave, worker 1 divides \(m_{-1}\) equally
\begin{equation*}
    m_j=\frac{m_{-1}}{n-1}, \quad j\geq2.
\end{equation*}
Define
\begin{equation*}
    \psi_n = \xi^2\rho(n-1).
\end{equation*}
Thus the total precision contributed by workers \(2,\dots,n\) is
\begin{equation*}
\begin{split}
    H_T &= \sum_{j=2}^n h(\frac{m_{-1}}{n-1}:\xi) \\
        &= \sum_{j=2}^n \frac{s\rho }{s + \psi_n} \\
        &= \frac{s\rho(n-1) }{s + \psi_n}.
\end{split}
\end{equation*}
Including worker 1's precision weight before the final meeting with the DM is
\begin{equation*}
    G_V(s) = \rho + H_T = \frac{\rho(ns+\psi_n)}{s + \psi_n}.
\end{equation*}
This gives vertical payoff function of
\begin{equation}
\label{eq:vertical_payoff_n_diff_comms}
    U_V(m_1) = \bar{u} - \frac{\kappa -m_1 + \psi_n }{\rho[n(\kappa-m_1)+\psi_n]} - \frac{\xi_1^2}{m_1}, \qquad 0<m_1\leq \bar{m}.
\end{equation}
Differentiating and substituting in \(\frac{\psi_n}{\rho} = \xi^2(n-1)\) yields
\begin{equation}
    \begin{split}
        U_V^\prime(m_1) &= - \frac{\xi^2(n-1)^2}{[n(\kappa-m_1)+\psi_n]^2} +\frac{\xi_1^2}{m_1^2} \\
        U_V^{\prime\prime}(m_1) & = - \frac{2\xi^2n(n-1)^2}{[n(\kappa-m_1)+\psi_n]^3} -\frac{2\xi_1^2}{m_1^3} <0.
    \end{split}
\end{equation}
Thus the vertical payoff is strictly concave. The unconstrained first-order condition is given by
\begin{equation*}
     \frac{\xi(n-1)}{[n(\kappa-m_1)+\psi_n]} =\frac{\xi_1}{m_1}, 
\end{equation*}
this can be rearranged to give the final meeting length
\begin{equation}
\label{eq:vertical_foc_meeting_1}
     m_1(\kappa) =\frac{\xi_1(n\kappa +\psi_n)}{n\xi_1+(n-1)\xi}. 
\end{equation}
The constrained optimum is
\begin{equation}
    m_1^*(\kappa) = \min\{\bar{m}, m_1(\kappa)\}.
\end{equation}
Let \(B_n=n\xi_1+(n-1)\xi\), the DM meeting constraint binds when \(m_1(\kappa)\geq \bar{m}\),  which can be written as
\begin{equation}
   \kappa \geq \kappa_{B,n} \equiv \frac{\bar{m}B_n}{\xi_1n} -\frac{\psi_n}{n}.
\end{equation}
this establishes the exact closed-form threshold $\kappa_{B,n}$ governing the transition between the interior and capacity-constrained vertical regimes. \\

\noindent Under the horizontal corner condition \(\bar{m} \leq \rho\xi_1 d \Rightarrow \bar{m} < \rho\xi_1 \xi  \), from equation (\ref{eq:vertical_foc_meeting_1}) this yields \(m_1(\bar{m}) \geq \bar{m}\) if and only if \(\bar{m} < \rho\xi_1 \xi \). So the unconstrained meeting exceeds the limit \(\kappa\) and vertical optimum \(m_1^*(\kappa=\bar{m})=\bar{m}\). The remaining meeting for workers \(2,\dots,n\) is therefore \(s=\kappa-m_1^*=0\): this implies no information aggregation occurs. Substituting \(m_1^*=\kappa=\bar{m}\) into the Vertical payoff gives
\begin{equation*}
    U_V(M) = \bar{u}- \frac{\psi_n}{\rho\psi_n} - \frac{\xi_1^2}{\bar{m}} = \bar{u}- \frac{1}{\rho} - \frac{\xi_1^2}{\bar{m}} = U_H^C.
\end{equation*}
So the structures have equal payoffs. This establishes (a) \(\bar{\kappa}_n=\bar{m}\) \\

\noindent Now suppose the worker 1 meeting with the DM is not at the constraint, \(\kappa<\kappa_{B,n} \), so that \(m_1^*=m_1(\kappa) < \bar{m} \). Let \(D_V=n\kappa+\psi_n\), then at \(m_1=m_1(\kappa)\) using \(B_n\) and equation \(\ref{eq:vertical_foc_meeting_1}\),
\begin{equation*}
    \begin{split}
        n(\kappa-m_1)+\psi_n &= n\kappa+\psi_n - nm_1(\kappa) \\
        & = D_V - \frac{n\xi_1 D_V}{B_n} \\
        & = D_V \frac{(n-1)\xi}{B_n}.
    \end{split}
\end{equation*}
Moreover,
\begin{equation*} 
\begin{split} \frac{\kappa-m_1+\psi_n}{\rho[n(\kappa-m_1)+\psi_n]}-\frac1{n\rho} &=\frac{n(\kappa-m_1+\psi_n)-[n(\kappa-m_1)+\psi_n]} {n\rho[n(\kappa-m_1)+\psi_n]}\\ &=\frac{(n-1)\psi_n}{n\rho[n(\kappa-m_1)+\psi_n]}\\ &=\frac{\xi^2(n-1)^2}{n[n(\kappa-m_1)+\psi_n]}, 
\end{split} 
\end{equation*}
giving
\begin{equation*}
    \frac{\kappa-m+\psi_n}{\rho[n(\kappa-m)+\psi_n]}=\frac1{n\rho}+\frac{\xi^2(n-1)^2}{n[n(\kappa-m)+\psi_n]}.
\end{equation*}
Multiply the first-order condition with \(\xi_1\) to get
\begin{equation*}
         \frac{\xi\xi_1(n-1)}{[n(\kappa-m_1)+\psi_n]} =\frac{\xi_1^2}{m_1}.
\end{equation*}
Combining the terms gives a vertical payoff (\ref{eq:vertical_payoff_n_diff_comms}) gives
\begin{equation*} 
\begin{split} 
U_V^I(m_1) &= \bar{u} - \frac{\kappa -m_1 + \psi_n }{\rho[n(\kappa-m_1)+\psi_n]} - \frac{\xi_1^2}{m_1} \\
        &= \bar{u} - \frac1{n\rho}-\frac{\xi^2(n-1)^2}{n[n(\kappa-m)+\psi_n]} -  \frac{\xi\xi_1(n-1)}{[n(\kappa-m_1)+\psi_n]} \\
        & = \bar{u} - \frac1{n\rho} \frac{\xi^2(n-1)^2+ \xi\xi_1 n(n-1) }{n[ \frac{D_V(n-1)\xi}{B_n}]} \\
        & = \bar{u} - \frac1{n\rho} - B_n \frac{\xi(n-1)+ \xi_1 n }{n[D_V]} \\
        & = \bar{u} - \frac1{n\rho} - \frac{B_n^2 }{n[n\kappa + \psi_n]}
\end{split} 
\end{equation*}
Equating this with the horizontal payoff from equation (\ref{eq:horizontal_payoff_n_workers_optimal}) yields
\begin{equation*}
    \frac{B_n^2}{n\kappa + \psi_n} = \frac{S_n^2}{D_{H,n}}.
\end{equation*}
Thus the \(\kappa<\kappa_{B,n}\) at which the vertical structure has a higher payoff for an interior worker 1 and DM meeting i.e. \(m_1^*<\bar{m}\) is given by
\begin{equation}
  \kappa_{I,n} = \frac{D_{H,n}\big(\frac{B_n}{S_n}\big)^2-\psi_n}{n}.
\end{equation}
Hence, \(\bar{\kappa}_n= \kappa_{I,n}\). \\

\noindent Finally, suppose that \(\kappa\geq\kappa_{B,n}\), so now \(m_1^*=\bar{m}\). Let \(t=\kappa-\bar{m}\). The vertical payoff is now
\begin{equation}
\label{eq:vertical_payoff_n_workers_optimal_C}
    U_V^C(t) = \bar{u} - \frac{t+\psi_n}{\rho(n t +\psi_n)} - \frac{\xi_1^2}{\bar{m}}
\end{equation}
Define
\begin{equation*}
    F_n = \bar{m} \bigl[(n+1)\xi_1+(n-1)\xi\bigr]-n\rho\xi_1^2d
\end{equation*}
Under \(\bar{m} > \rho \xi_1 d\), 
\begin{equation*}
\begin{split}
        F_n &> \rho \xi_1 d\bigl[(n+1)\xi_1+(n-1)\xi\bigr]-n\rho\xi_1^2d \\
        & =\rho \xi_1 d\bigl[\xi_1+(n-1)\xi\bigr]
\end{split}
\end{equation*}
Subtract the horizontal payoff (\ref{eq:horizontal_payoff_n_workers_optimal}) from the capped vertical payoff (\ref{eq:vertical_payoff_n_workers_optimal_C}) 
\begin{equation*}
   U_V^C(t)-U_H = \frac{1}{n\rho} + \frac{S_n^2}{nD_{H,n}}
-\frac{t+\psi_n}{\rho(nt+\psi_n)}-\frac{\xi_1^2}{\bar{m}}.
\end{equation*}
Then using \(\psi_n=\rho(n-1)\xi^2\)
\begin{equation*}
    \begin{split}
        \frac{t+\psi_n}{\rho(nt+\psi_n)} &= \frac{1}{n\rho} - \frac{1}{n\rho} + \frac{t+\psi_n}{\rho(nt+\psi_n)} \\
        &= \frac{1}{n\rho}  + \frac{n(t+\psi_n)-(nt+\psi_n)}{n\rho(nt+\psi_n)} \\
        &= \frac{1}{n\rho}  + \frac{(n-1)\psi_n}{n\rho(nt+\psi_n)} \\
        &= \frac{1}{n\rho}  + \frac{\xi^2(n-1)^2}{n(nt+\psi_n)}. \\
    \end{split}
\end{equation*}
Thus
\begin{equation*}
 \begin{split}
   U_V^C(t)-U_H &= \frac{S_n^2}{nD_{H,n}}
- \frac{\xi^2(n-1)^2}{n(nt+\psi_n)}-\frac{\xi_1^2}{\bar{m}} \\
& = \frac{\bar{m}(nt+\psi_n)S_n^2 -\bar{m}D_{H,n}\xi^2(n-1)^2-n\xi_1^2(nt+\psi_n)D_{H,n}}{n\bar{m}(nt+\psi_n)D_{H,n}} \\
&=\frac{nt\big(\bar{m} S_n^2-n\xi_1^2D_{H,n}\bigr) +
\bigl(\bar{m} \psi_nS_n^2 -\bar{m}D_{H,n}\xi^2(n-1)^2 -n\xi_1^2\psi_nD_{H,n}\big)}{n\bar{m}(nt+\psi_n)D_{H,n}}
 \end{split}
\end{equation*}
Replace coefficient on \(t\) with
\begin{equation*}
\begin{split}
\bar{m} S_n^2-n\xi_1^2D_{H,n} =&\bar{m}\left[\xi_1^2+2(n-1)\xi_1\xi +(n-1)^2\xi^2\right]\\
&-n\xi_1^2\left[n\bar{m}+\rho(n-1)d^2\right]\\
=&(n-1)d\left(
\bar{m}[(n+1)\xi_1+(n-1)\xi] -n\rho\xi_1^2d \right)\\
=&(n-1)dF_n.
\end{split}
\end{equation*}
The substitute in \(\psi_n=\rho(n-1)\xi^2\) and \(D_{H,n}=n\bar{m}+\rho(n-1)d^2\) to get 
\begin{equation*}
    U_V^C(t)-U_H = \frac{(n-1)\left[F_n d t-\xi^2(n-1)(\bar{m}-\rho\xi_1d)^2\right]}{\bar{m}(nt+\psi_n)D_{H,n}}.
\end{equation*}
The denominator is positive and so vertical structure has a higher payoff when
\begin{equation*}
    t > \frac{\xi^2(n-1)(\bar{m}-\rho\xi_1d)^2}{F_n d}.
\end{equation*}
Given \(t=\kappa-\bar{m}\), the switch occurs at
\begin{equation*}
    \kappa_{C,n}= \bar{m} + \frac{\xi^2(n-1)(\bar{m}-\rho\xi_1d)^2}{F_n d}.
\end{equation*}
Note that the vertical payoff formula changes at \(\kappa_{B,n}\), when the meeting between worker 1 and DM meets the DM's constraint. Since the crossing \(\kappa_{I,n}\) is calculated using the interior formula, it can be that \(\kappa_{I,n}> \kappa_{B,n}\), it is a formal root of the interior expression, but it does not describe the actual crossing: that expression no longer applies and instead \(\kappa_{C,n}\) describes the actual crossing threshold.\\

\noindent To check that \(\kappa_{C,n}\) is a unique switch point over the feasible \(\kappa\geq \bar{m}\). Need to show that the vertical structure's payoff is below horizontal at the smallest feasible \(\kappa\), eventually rises above it, and crosses it only once.

\medskip
\noindent\emph{Existence and uniqueness of the crossing.}
Compare the structures at the smallest feasible value,
\(\kappa=\bar{m}\). In the horizontal-corner region, \(\bar{m}\leq \rho \xi_1  d\), the vertical cap binds at \(\kappa=\bar{m}\), and the two structures have the same payoff. For every \(\kappa>\bar{m}\), the feasible allocation
constructed after equation
\eqref{eq:horizontal_payoff_corner_n_workers} gives a strictly higher
vertical payoff. Hence the threshold in this region is
\(\bar{\kappa}_n=\bar{m}\).

Now consider the interior-horizontal region, \(\bar{m}>\rho \xi_1  d\). If the
vertical cap binds at \(\kappa=\bar{m}\), the vertical payoff is the
worker-1-only payoff
\begin{equation*}
U_V^*(\bar{m})=\bar{u}-\frac{1}{\rho}-\frac{\xi_1^2}{\bar{m}}.
\end{equation*}
Direct comparison with equation
\eqref{eq:horizontal_payoff_n_workers_optimal} gives
\(U_H>U_V^*(\bar{m})\). If instead the vertical solution is interior at
\(\kappa=\bar{m}\), then \(\bar{m}>\rho \xi_1 \xi\). Let \(q=\bar{m}-\rho \xi_1 \xi>0\). Subtracting
the horizontal payoff from the interior vertical loss, or equivalently
subtracting the interior vertical payoff from the horizontal payoff,
gives
\begin{equation*}
\begin{aligned}
U_H-U_V^I(\bar{m})
&=L_V^I(\bar{m})-L_H\\
&=
\frac{\xi_1 (n-1)}{(n\bar{m}+\psi_n)D_{H,n}}
\Bigl(
q[(n+1)\xi_1 +2(n-1)\xi]
+\rho \xi_1^2[n\xi_1 +(n-1)\xi]
\Bigr)>0.
\end{aligned}
\end{equation*}
Thus, throughout the interior-horizontal region, horizontal
communication has the higher payoff at \(\kappa=\bar{m}\).

Next, consider large \(\kappa\). On the capped branch,
\(t=\kappa-\bar{m}\) tends to infinity, so equation
\eqref{eq:vertical_payoff_n_workers_optimal_C} implies
\begin{equation*}
\lim_{\kappa\to\infty}U_V^*(\kappa)
=
\bar{u}-\frac{1}{n\rho}-\frac{\xi_1^2}{\bar{m}}.
\end{equation*}
Using equation \eqref{eq:horizontal_payoff_n_workers_optimal},
\begin{equation*}
U_V^*(\infty)-U_H = \frac{(n-1)dF_n}{n\bar{m}D_{H,n}}>0,
\end{equation*}
where \(F_n>0\) was established above. Therefore, vertical
communication eventually has the higher payoff.

It remains to show that the optimized vertical payoff increases
strictly with \(\kappa\). For a fixed feasible meeting length
\(m_1\), write \(s=\kappa-m_1\). The vertical loss component that
depends on \(s\) satisfies
\begin{equation*}
\frac{\partial}{\partial s}
\left(\frac{s+\psi_n}{\rho(ns+\psi_n)}\right) =
-\frac{(n-1)\psi_n}{\rho(ns+\psi_n)^2}<0. \end{equation*}
Thus, holding \(m_1\) fixed, increasing \(\kappa\) strictly lowers
vertical loss and strictly raises vertical payoff. The optimized
vertical payoff is therefore strictly increasing in \(\kappa\).
It is continuous when the optimum reaches the cap:
the interior and capped formulas coincide at
\(\kappa=\kappa_{B,n}\), where \(m_1^*(\kappa)=\bar{m}\).

Consequently, in the interior-horizontal region, the continuous,
strictly increasing optimized vertical payoff starts below \(U_H\)
at \(\kappa=\bar{m}\) and eventually exceeds \(U_H\). It crosses \(U_H\)
exactly once. If the interior root \(\kappa_{I,n}\) is no greater
than \(\kappa_{B,n}\), the crossing is \(\kappa_{I,n}\); otherwise
the crossing lies on the capped branch and is \(\kappa_{C,n}\).
Combining this with the horizontal-corner case gives
\begin{equation*}
\bar{\kappa}_n=
\begin{cases}
\bar{m}, & \bar{m}\leq \rho \xi_1 d,\\
\kappa_{I,n}, & \bar{m}>\rho \xi_1 d \text{ and } \kappa_{I,n}\leq\kappa_{B,n},\\
\kappa_{C,n}, & \bar{m}>\rho \xi_1  d \text{ and } \kappa_{I,n}>\kappa_{B,n}.
\end{cases}
\end{equation*}
Therefore, for feasible \(\kappa\geq \bar{m}\), vertical communication
is strictly preferred if and only if
\(\kappa>\bar{\kappa}_n\).

\end{proof}

\printbibliography

\newpage

\end{document}